\documentclass[sigconf]{acmart}

\AtBeginDocument{%
  }

\copyrightyear{2026}
\acmYear{2026}
\setcopyright{cc}
\setcctype{by}
\acmConference[ICCAD '26]{IEEE/ACM International Conference on Computer-Aided Design}{November 08--12, 2026}{San Jose, CA, USA}
\acmBooktitle{IEEE/ACM International Conference on Computer-Aided Design (ICCAD '26), November 08--12, 2026, San Jose, CA, USA}
\acmDOI{10.1145/3831252.3833973}
\acmISBN{979-8-4007-2873-0/2026/11}

\newcommand{\ignore}[1]{}
\newcommand{\fixme}[1]{\textcolor{red}{#1}}

\usepackage{algorithmic}
\usepackage{graphicx}
\usepackage{textcomp}
\usepackage{xcolor}
\usepackage{booktabs}
\usepackage{tabularx}
\usepackage{makecell}
\usepackage{array}
\usepackage{subcaption}

\begin{document}

\title[HARP: Hadamard-Domain Write-and-Verify for Noise-Robust RRAM Programming]{HARP: Hadamard-Domain Write-and-Verify \\for Noise-Robust RRAM Programming
}

\author{Ilhuan Choi, Jiwon Yoo, Yoona Lee, Yewon Jeong, Jason Jaesung Lee, and Woo-Seok Choi}
\affiliation{%
  \institution{Dept. of Electrical and Computer Engineering, ISRC, Seoul National University, South Korea}
  \country{}}
  \email{{idisian,wooseokchoi}@snu.ac.kr}

\renewcommand{\shortauthors}{Choi et al.}

\begin{abstract}
Write-and-verify (WV) is essential for multi-level RRAM programming, yet under scaled-voltage and low-SNR conditions the verify read increasingly limits mapping accuracy, convergence speed and energy.
We propose a Hadamard-domain WV framework that improves verify reliability without adding analog hardware.
\emph{HD-PV} (Hadamard-Encoded Parallel-Verify) replaces $N$ one-hot reads with $N$ orthogonal Hadamard patterns for an $N$-cell column.
Without increasing the column read count, inverse Hadamard decoding reduces uncorrelated read-noise variance by a factor of $N$ and cancels common-mode disturbances.
\emph{HARP} (Hadamard-based ADC-Energy-Reduced Parallel-Verify) further replaces full SAR conversions with lightweight comparisons because WV requires only ternary update decisions rather than full digital codes.
Across CIFAR-10, CIFAR-100, and keyword spotting under severe read noise, conventional WV loses over 20\,\% accuracy on CIFAR-10, while HD-PV and HARP limit the loss to 0.6\,\% and 1\,\% under the same memory footprint. 
Compared to conventional multi-read averaging, HD-PV and HARP achieve comparable accuracy with up to $6.1\times$ and $3.5\times$ lower latency and $6.2\times$ and $9.5\times$ better energy efficiency, respectively.
To the best of our knowledge, this is the first Hadamard-encoded verification scheme for RRAM WV.
\end{abstract}


\maketitle

\section{Introduction}

Compute-in-Memory (CiM) architectures alleviate the data-movement bottleneck of von Neumann systems, offering substantial energy and throughput benefits for edge-AI workloads~\cite{horowitz20141,hung2020challenges,9870009}.
Among embedded non-volatile memories (eNVMs), resistive random-access memory (RRAM) enables multi-level conductance storage, high density, and strong CMOS compatibility, making it well suited for analog CiM (ACiM) accelerators~\cite{chen2018neurosim,yao2020fully,wan2022compute,ielmini2025resistive}.

For edge AI, RRAM cells must be reprogrammed across DNN models and workloads, while limited device endurance constrains allowable write cycles~\cite{meng2022write,he2023prive,ielmini2025resistive}.
Beyond one-time deployment, adaptive edge learning, in-field recalibration, and growing model sizes further increase the frequency and scale of weight programming~\cite{yao2020fully,wan2022compute,zhang2023edge}.
Reliable inference therefore depends on \emph{fast and accurate conductance updates}.

Achieving such updates requires an iterative write-and-verify (WV) process.
Two factors make WV increasingly difficult as technology scales.
First, RRAM programming is nonlinear, asymmetric, and stochastic because of device mismatch and cycle-to-cycle variation~\cite{chen2018neurosim,he2019noise,yu2021rram}.
Second, scaling shrinks sensing margins and increases the influence of thermal and mismatch-induced noise in column analog-to-digital converters (ADCs)~\cite{xie2023high, li202240, tang2022low}.
As a result, under scaled conditions, the bottleneck of WV shifts from the write pulse to the verify read: each iteration estimates the current cell state from a noisy observation and decides whether to pulse up, down, or stop, so inaccurate estimates trigger incorrect updates, waste iterations, degrade mapping quality, and ultimately limit convergence.

RRAM-based ACiM systems commonly use iterative WV to program each cell until it reaches the target conductance~\cite{yao2020fully,wan2022compute,gonugondla2020swipe,meng2022write,he2023prive,yu2021rram,huang2024rwric,zhang2021efficient,ielmini2025resistive}.
While effective, this process inevitably incurs considerable latency and energy penalties, motivating extensive research to make WV more efficient. 
Prior studies reduce iteration count using MSB-first programming~\cite{gonugondla2020swipe,he2023prive}, probabilistic termination~\cite{meng2022write}, or coarse-to-fine strategies~\cite{yu2021rram,zhang2021efficient}.
However, most do not fully capture ADC behavior, even though it dominates hardware latency and energy---up to 87.8\,\% in \cite{yao2020fully} and 63.4\,\% in \cite{li202240}. 
Moreover, limited signal-to-noise ratio (SNR) under scaled supply voltages further constrains accurate WV, motivating approaches that improve verify-read quality without increasing ADC cost.

This work addresses this directly.
Our key insight is that recovering an $N$-cell column already requires $N$ read patterns.
Conventional WV uses $N$ one-hot patterns, equivalent to an identity read matrix, which is valid but statistically suboptimal.
Replacing this basis with an $N\times N$ Hadamard matrix yields the best linear unbiased cell-state estimator under i.i.d. read noise:
inverse decoding reduces uncorrelated noise variance by a factor of $N$ without additional column reads.
Moreover, because all but the first Hadamard row are balanced, it cancels common-mode disturbances for $N-1$ cells.

Building on this idea, we observe that WV requires only ternary decisions---whether each cell should move up, move down, or stop---rather than full digital codes.
HARP therefore performs Hadamard-domain compare-only verification, reducing each measurement from an $n$-step SAR conversion to one or two comparisons instead of $n$ sequential bit decisions.

The resulting framework improves circuit-level programming fidelity and translates it into higher system-level inference accuracy under identical cell precision, slice count, and array dimensions, preserving the same memory footprint.
Conversely, the same programming margin can be traded for lower memory provisioning at a fixed accuracy target.

Specifically, this paper makes the following contributions:
\begin{enumerate}
    \item \textbf{Hadamard-Encoded Parallel-Verify (HD-PV)}: We introduce, to the best of our knowledge, the first Hadamard-encoded verification scheme for RRAM WV. For an $N$-cell column, HD-PV replaces $N$ one-hot reads with $N$ Hadamard reads and reduces the effective variance of the uncorrelated read-noise component by 1/$N$ after inverse decoding. Additionally, common-mode disturbances are canceled for $N-1$ of the $N$ decoded cells. No additional analog hardware is required and the number of column read patterns does not increase. We validate the analog feasibility through post-layout simulation. 
    \item \textbf{Hadamard-based ADC-Energy-Reduced Parallel-Verify (HARP)}: We propose HARP, a Hadamard-based WV mode that trades full-resolution Hadamard decoding for decision-level Hadamard aggregation while replacing full SAR conversions with lightweight compare-only decisions.
    This approach substantially reduces ADC latency and energy during verification by up to 3.5$\times$ and 9.5$\times$, respectively, compared to multi-read averaging.
    \item \textbf{Comprehensive cross-layer evaluation}: Using device- and circuit-level modeling with system-level inference experiments on CIFAR-10, CIFAR-100, and keyword spotting (KWS), we show that the proposed methods improve mapping accuracy and convergence, maintain high inference accuracy under the same memory footprint, and outperform multi-read averaging in latency and energy.
\end{enumerate}

\begin{figure}[t]
    \centering
    \includegraphics[width=1.0\columnwidth]{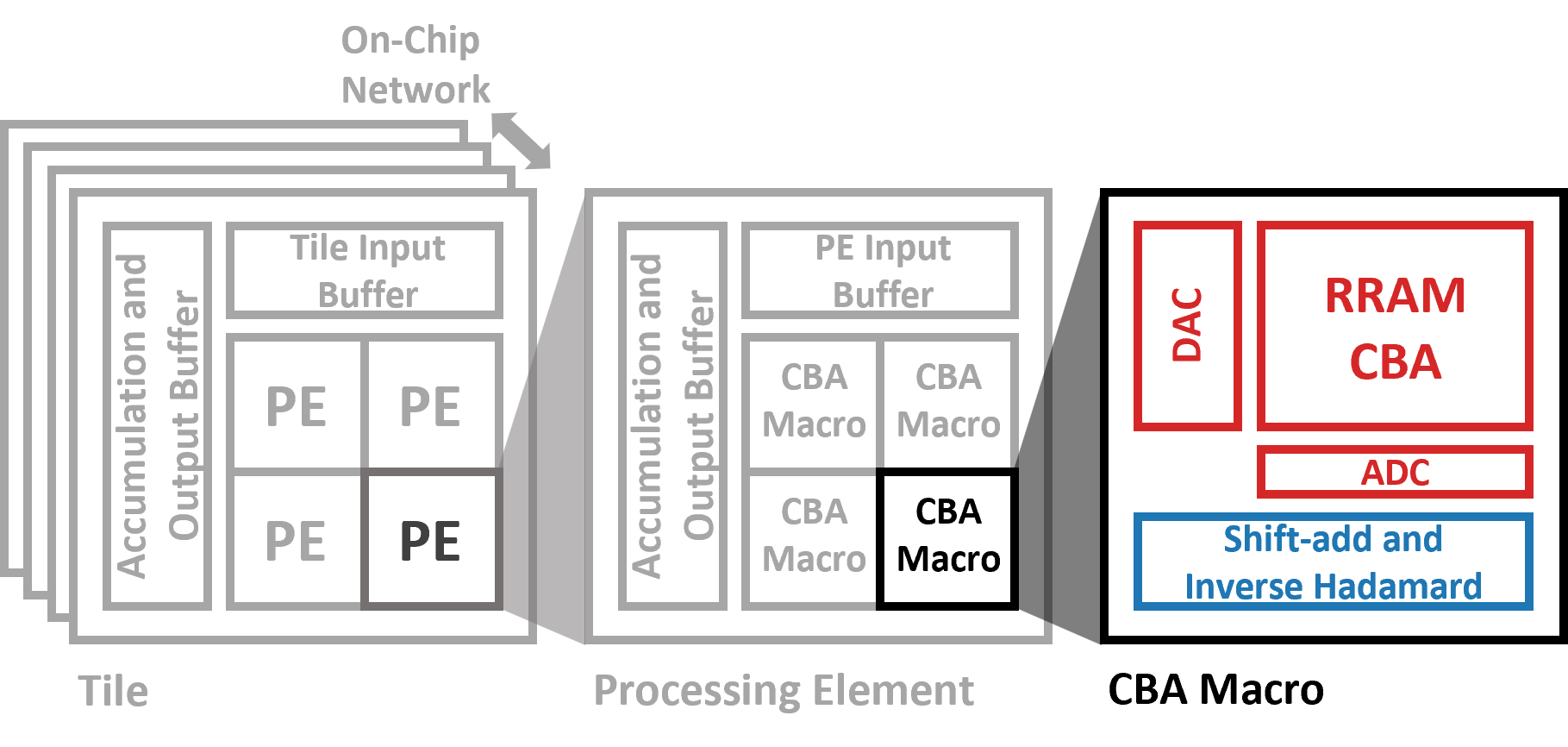}
    \vspace{-2em}
    \caption{Hierarchical organization of an ACiM chip. This work targets the highlighted CBA macro, where improving WV accuracy and efficiency can scale upward to the PE and tile levels.}
    \label{fig:ACiM_architecture}
    \vspace{-1em}
\end{figure}

\section{Background and Motivation}
\subsection{RRAM-Based ACiM Macro and Write-and-Verify}

RRAM is a non-volatile device whose conductance is tuned between a high-resistance state (HRS) and a low-resistance state (LRS) via \textit{SET}/\textit{RESET} voltage pulses. Intermediate levels enable multi-level cell (MLC) operation.

Fig.~\ref{fig:ACiM_architecture} shows the hierarchy of a typical RRAM-based ACiM chip. 
Each tile contains processing elements (PEs), and each PE integrates one or more RRAM crossbar-array (CBA) macros for parallel vector-matrix multiplication (VMM)~\cite{peng2020dnn+,andrulis2023raella}.
The CBA macro includes DACs for input encoding,
ADCs for current sensing, and shift-and-add logic for digital post-processing.
This work targets the CBA macro because macro-level WV improvement directly propagates to PE and tile-level efficiency.

\begin{figure}[t]
    \centering
    \includegraphics[width=0.9\columnwidth]{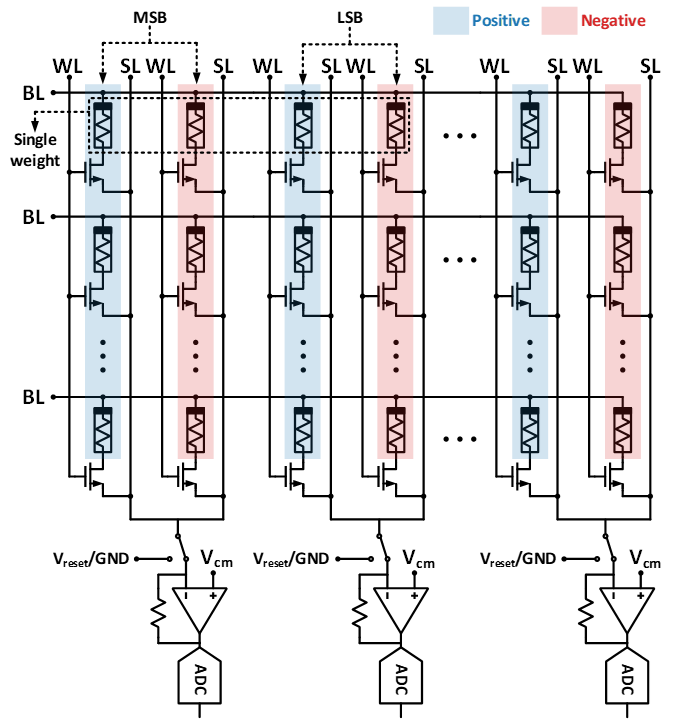}
    \vspace{-1em}
    \caption{1T-1R RRAM CBA used for ACiM. Signed weights are stored by adjacent positive and negative columns, and column currents implement VMM computation.}
    \label{fig:CBA_structure}
    \vspace{-1em}
\end{figure}

A one-transistor–one-resistor (1T–1R) RRAM CBA~\cite{meng2022write,he2023prive,yao2020fully,zhang2021efficient,correll20258} is illustrated in Fig.~\ref{fig:CBA_structure}. 
Applying input vector $\mathbf{x}$ to the bitlines (BLs) produces column currents $\mathbf{i}=\mathbf{G}\mathbf{x}$, digitized by column ADCs for VMM. 
Because RRAM stores non-negative conductance only, signed weights use adjacent positive and negative columns with effective current \( i_{\mathrm{eff}} = i_{+} - i_{-} \)~\cite{meng2022write,he2023prive,correll20258}.
Practical systems use \emph{bit-slicing}~\cite{sze2017efficient,gonugondla2020swipe,meng2022write,he2023prive}, where each weight is stored across multiple cells, 
together with \emph{bit-serial} input driving~\cite{correll20258,yao2020fully}. 
With weight precision \( B \) and \( B_C \) bits per cell, 
each weight is partitioned into $k=B/B_C$ slices
with VMM output $\mathbf{i} = \sum_{l=1}^{k} 2^{(l-1)\cdot B_C} \, \mathbf{i}^{(l)}$, where $\mathbf{i}^{(l)} = \mathbf{G}^{(l)} \mathbf{x}$
and \( \mathbf{G}^{(l)} \) is the conductance matrix of the $l$-th slice.
Because every slice must be programmed reliably, WV remains a fundamental primitive even when the inference is heavily parallelized.

\subsection{Verify-Read Error in RRAM CBA}
\label{sec:Verify-Read Error in RRAM CBA}

Accurate modeling of the verify-read error is essential for evaluating any WV improvement.
We decompose the total verify error into three components.

{\bf Programming noise.}
RRAM programming exhibits nonlinear and asymmetric \textit{SET}/\textit{RESET} behavior (see Fig.~\ref{fig:RRAM_noise}), 
along with stochastic deviations due to device mismatch (D2D) and cycle-to-cycle variations (C2C)~\cite{chen2018neurosim}. 
These effects are commonly modeled as Gaussian perturbations~\cite{he2019noise}:
\begin{equation}
w = \operatorname{clip}\!\left(w^* + n_{\text{map}},\, \text{LRS},\, \text{HRS}\right), \quad{n}_\text{map} \!\sim\! \mathcal{N}(\mathbf{0}, \sigma_\text{map}^2),
\label{eq:mapping_error}
\end{equation}
where \( w^* \) is the target, \( w \) the actual programmed conductance, and \( n_{\text{map}} \) the stochastic mapping error. The standard deviation is often expressed in a normalized form as \( \sigma_{\text{map}}/G_{\max} \), where \( G_{\max} \) denotes the maximum conductance corresponding to the LRS state~\cite{huang2024rwric,gonugondla2020swipe}.

\begin{figure}[t]
    \centering
    \includegraphics[width=1.0\columnwidth]{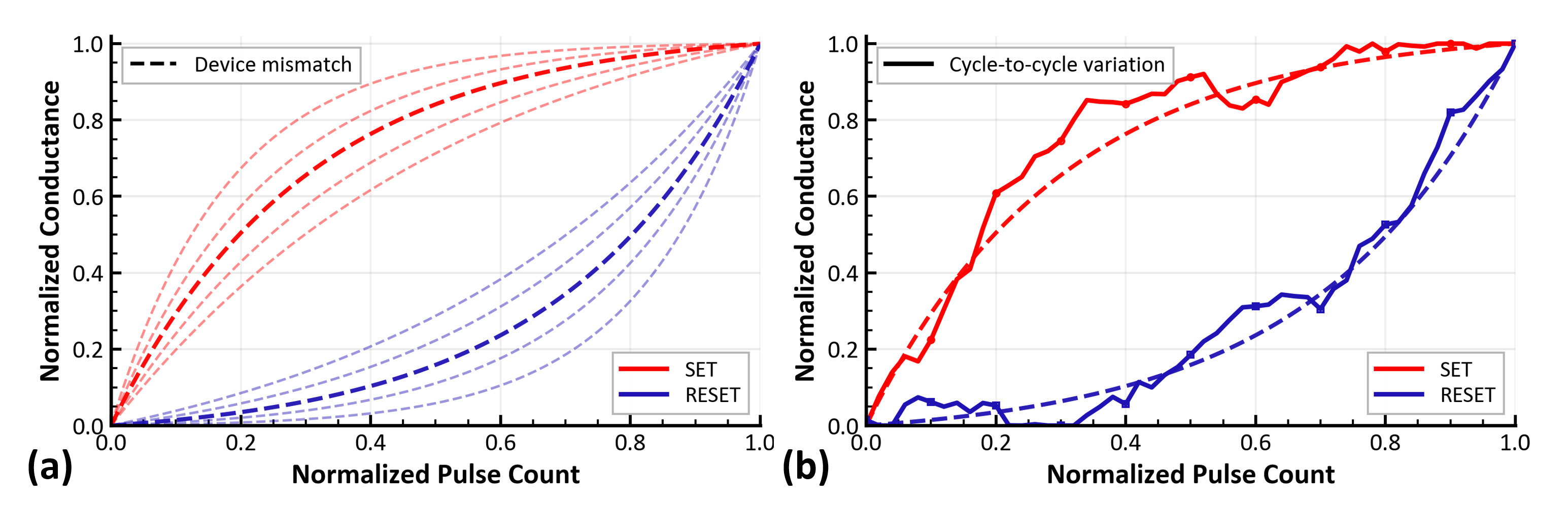}
    \vspace{-2em}    
    \caption{
    Nonlinear and asymmetric RRAM programming characteristics with (a) device mismatch and (b) cycle-to-cycle variation.
    }
    \label{fig:RRAM_noise}
    \vspace{-1em}
\end{figure}

{\bf Uncorrelated read noise.}
The dominant uncorrelated noise arises from the thermal noise in the trans-impedance amplifier (TIA) and ADC.
As ADC integration density and resolution increase, analog front-end noise becomes more significant, degrading read SNR. 
Prior measurements on MLC-RRAM arrays confirms that read noise critically limits the usable ADC resolution~\cite{shim2020two}.
Since these sources are independent across cells and across read patterns, their aggregate effect can be modeled as
\begin{equation}
{n}_{\text{uc}, i} \!\sim\! \mathcal{N}(\mathbf{0}, \sigma_\text{uc}^2), \text{  i.i.d. for each measurement $i$}.
\label{eq:uncorr_noise}
\end{equation}

{\bf Common-mode read noise.}
During verification, all $N$ measurements of a given column share the same TIA and ADC.
Any disturbance common to the analog front-end, e.g., TIA/ADC offset, reference voltage drift, or IR drop along the shared column path, appears as an additive offset that is constant across the $N$ read patterns within one verification sweep but varies independently across columns (which use separate TIAs and ADCs).
We model this as a per-column random variable:
\begin{equation}
\mu_{\text{cm}} \sim \mathcal{N}(0, \sigma_{\text{cm}}^2),  \text{  constant across measurements $i$}=1,...,N.
\label{eq:corr_noise}
\end{equation}
\ignore{
Unlike uncorrelated noise, correlated disturbances do not average out under conventional multi-read schemes.
However, as we show in Section~\ref{sec:WV}, Hadamard decoding provides a distinct advantage:
because all Hadamard rows except the first are orthogonal to the all-ones vector, common-mode disturbances that affect all cells equally are completely canceled for $N-1$ out of $N$ decoded cells.
}

The total observation for the $i$-th read pattern in a column's verification sweep is therefore:
\begin{equation}
\hat{y_i} = \mathbf{a}_i^\intercal \mathbf{w} + n_{\text{uc},i} + \mu_{\text{cm}},
\label{eq:read_noise}
\end{equation}
where $\mathbf{a}_i$ is the $i$-th row of the read matrix (identity row for conventional one-hot, Hadamard row for HD-PV), $n_{\text{uc},i}$ is i.i.d. across measurements, and $\mu_{\text{cm}}$ is shared across all $N$ measurements of that column.
Published measurement data from fabricated ACiM macros indicate that uncorrelated TIA and ADC noise is the dominant source of verify-read error at scaled supply voltages.
In the 40-nm MLC-RRAM macro of \cite{li202240}, the ADC front-end contributes the majority of the sensing uncertainty, with thermal noise from the TIA and comparator dominating at the sub-LSB level.
Similarly, the 28-nm macro of \cite{correll20258} reports that thermal noise, rather than systematic offset, limits multi-level cell discrimination.
We define the common-mode noise fraction as $\rho = \sigma_{\text{cm}}^2 / (\sigma_{\text{uc}}^2 + \sigma_{\text{cm}}^2)$.

\subsection{Optimal Read Basis for WV} \label{sec:WV}

\begin{figure}[t]
    \centering
    \includegraphics[width=1.0\columnwidth]{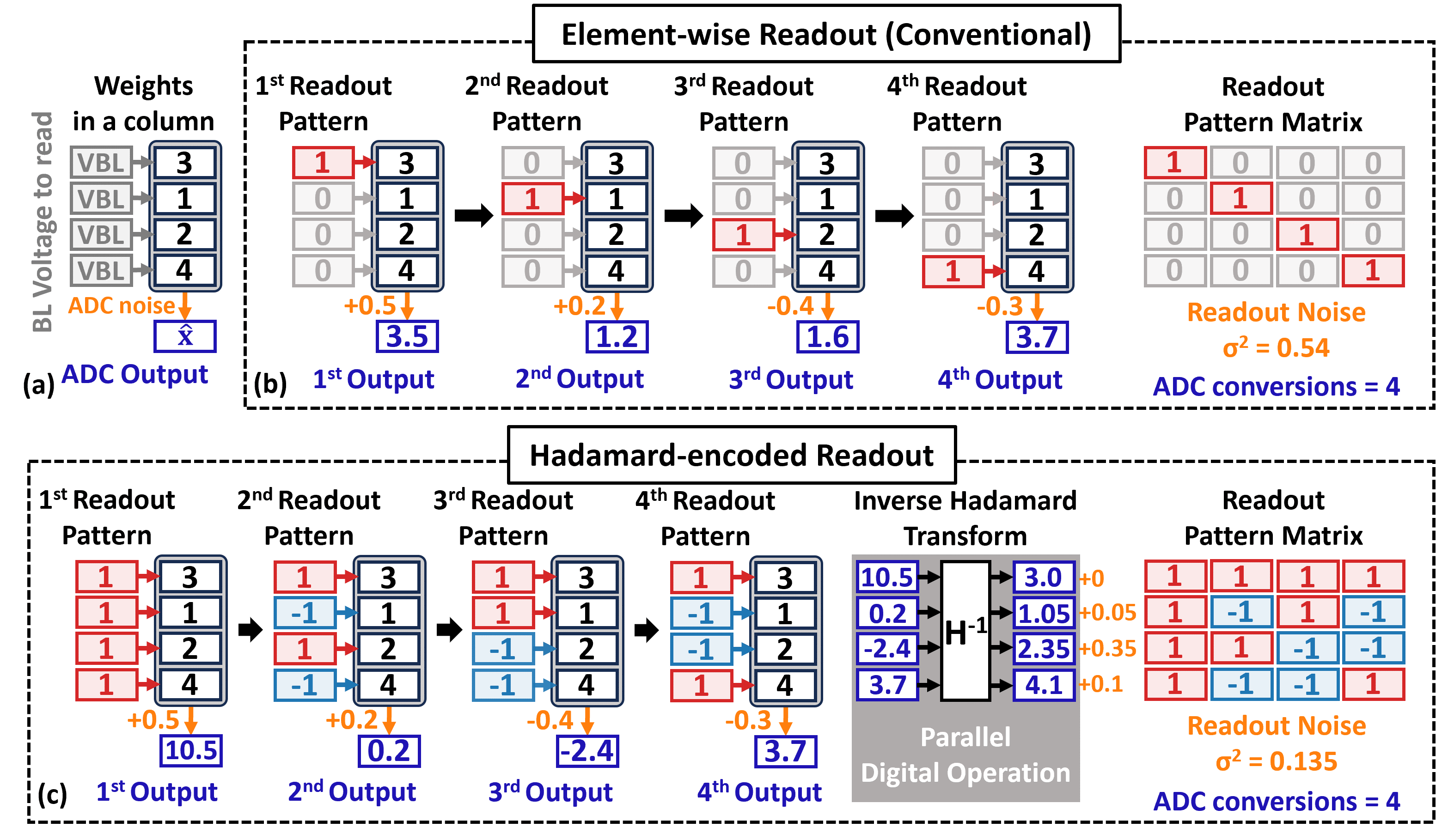}
    \vspace{-1em}
    \caption{
    Conventional element-wise (one-hot) verification and Hadamard-encoded verification. For an $N$-cell column, both use $N$ read patterns; Hadamard decoding averages the uncorrelated noise across measurements.
    }
    \label{fig:hadamard_snr}
    \vspace{-1em}
\end{figure}

At each WV iteration $t$, the verifier observes $\hat{y}_{i}$ from \eqref{eq:read_noise} and must make a ternary decision $D_i \in \{\text{SET},\text{RESET},\text{STOP}\}$.
Conventional element-wise (one-hot) verification uses the identity read matrix, providing a maximum-likelihood estimate from a single noisy observation per cell.
This is statistically inefficient: the $N$ observations used to read $N$ cells contain no redundancy.

A better approach chooses a read matrix $\mathbf{A} \in \{-1, +1\}^{N\times N}$ (matching the BL driving constraint) that minimizes the estimation error covariance $\mathrm{Cov}(\hat{\mathbf{x}}) = \sigma^2(\mathbf{A}^\intercal\mathbf{A})^{-1}$.

\begin{proposition} \label{thm:Hadamard}
{\bf (Optimality of Hadamard Encoding)}. Among all $\pm1$ measurement matrices $\mathbf{A} \in \{-1, +1\}^{N\times N}$, the per-cell estimation variance $\mathrm{Var}(\hat{x_i})$ due to uncorrelated read noise is minimized when $\mathbf{A}^\intercal\mathbf{A} = N\mathbf{I}$, yielding $\mathrm{Var}(\hat{x_i})=\sigma_{\text{uc}}^2/N$ for all $i$. 
Any Hadamard matrix achieves this bound. The identity matrix yields $\mathrm{Var}(\hat{x_i})=\sigma_{\text{uc}}^2$, which is $N\times$ larger.
\end{proposition}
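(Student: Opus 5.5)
We prove the statement in four steps: set up the least-squares estimator, obtain a per-cell lower bound from the structure of $\pm1$ matrices, show that Hadamard matrices attain it, and compute the identity baseline.

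\textbf{Estimator.} Restrict to the uncorrelated component in \eqref{eq:read_noise}, so $\hat{\mathbf{y}}=\mathbf{A}\mathbf{w}+\mathbf{n}_{\text{uc}}$ with $\mathbf{n}_{\text{uc}}\sim\mathcal{N}(\mathbf{0},\sigma_{\text{uc}}^2\mathbf{I})$. For any invertible $\mathbf{A}$, the unbiased estimate is $\hat{\mathbf{x}}=\mathbf{A}^{-1}\hat{\mathbf{y}}$, which coincides with the least-squares/BLUE estimator. Its covariance is $\sigma_{\text{uc}}^2(\mathbf{A}^\intercal\mathbf{A})^{-1}$, so
\[
\mathrm{Var}(\hat{x}_i)=\sigma_{\text{uc}}^2\,[(\mathbf{A}^\intercal\mathbf{A})^{-1}]_{ii}.
\]

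\textbf{Lower bound.} Let $\mathbf{M}=\mathbf{A}^\intercal\mathbf{A}$, which is symmetric positive definite. Every column of $\mathbf{A}$ has entries $\pm1$, so each diagonal entry is $M_{ii}=\|\mathbf{a}^{(i)}\|^2=N$. For any SPD matrix, $[\mathbf{M}^{-1}]_{ii}\ge 1/M_{ii}$. To prove this, write $\mathbf{e}_i=\mathbf{M}^{1/2}\mathbf{M}^{-1/2}\mathbf{e}_i$ and apply Cauchy--Schwarz:
\[
1=(\mathbf{e}_i^\intercal\mathbf{e}_i)^2\le(\mathbf{e}_i^\intercal\mathbf{M}\mathbf{e}_i)(\mathbf{e}_i^\intercal\mathbf{M}^{-1}\mathbf{e}_i).
\]
It follows that $\mathrm{Var}(\hat{x}_i)\ge\sigma_{\text{uc}}^2/N$ for every $i$ and every admissible $\mathbf{A}$.

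\textbf{Equality.} Equality in Cauchy--Schwarz requires $\mathbf{M}^{1/2}\mathbf{e}_i\parallel\mathbf{M}^{-1/2}\mathbf{e}_i$, which means $\mathbf{e}_i$ is an eigenvector of $\mathbf{M}$. If this holds for all $i$ simultaneously, $\mathbf{M}$ is diagonal, and since its diagonal entries are all $N$, $\mathbf{M}=N\mathbf{I}$. Conversely, $\mathbf{M}=N\mathbf{I}$ gives $[\mathbf{M}^{-1}]_{ii}=1/N$. This establishes the stated minimizing condition.

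\textbf{Hadamard matrices and the identity.} A Hadamard matrix $\mathbf{H}$ satisfies $\mathbf{H}\mathbf{H}^\intercal=N\mathbf{I}$ by definition, so $\mathbf{H}^{-1}=\mathbf{H}^\intercal/N$ and $\mathbf{H}^\intercal\mathbf{H}=N\mathbf{I}$ as well; it therefore attains the bound. The identity is not a $\pm1$ matrix, but as the one-hot baseline it gives $\mathbf{M}=\mathbf{I}$ and hence $\mathrm{Var}(\hat{x}_i)=\sigma_{\text{uc}}^2$, which is a factor of $N$ larger.

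\textbf{Main obstacle.} The substantive step is the per-coordinate bound. The paper's notion of ``minimized'' must be read either coordinatewise or in trace/Löwner order, and the diagonal inequality $[\mathbf{M}^{-1}]_{ii}\ge1/M_{ii}$ handles the coordinatewise reading cleanly. If one argues via the trace instead, the AM--HM inequality on eigenvalues gives
\[
\mathrm{tr}(\mathbf{M}^{-1})\ge N^2/\mathrm{tr}(\mathbf{M})=1,
\]
with equality iff $\mathbf{M}=N\mathbf{I}$, which yields the same conclusion. We also note that Hadamard matrices exist only for $N\in\{1,2\}$ or $4\mid N$. For other $N$ the bound is valid but not attained, so the claim ``any Hadamard matrix achieves this bound'' applies exactly for the admissible $N$.
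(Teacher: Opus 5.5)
Your proof is correct, but its key step differs from the paper's. The paper argues only at the level of total variance. It uses $\mathrm{tr}(\mathbf{A}^\intercal\mathbf{A})=N^2$ and Cauchy--Schwarz on the eigenvalues to get $\mathrm{tr}((\mathbf{A}^\intercal\mathbf{A})^{-1})=\sum_i 1/\lambda_i\ge 1$, with equality iff all $\lambda_i=N$; this is exactly the alternative you sketch at the end. You instead use the finer fact that every diagonal entry $M_{ii}$ equals $N$ and apply $[\mathbf{M}^{-1}]_{ii}\ge 1/M_{ii}$, which bounds each cell separately. That gives a genuinely stronger conclusion and matches the word ``per-cell'' in the statement. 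The trace bound controls only the sum (or average) of the variances, so by itself it does not rule out a $\pm1$ design that gives some individual cell a variance below $\sigma_{\text{uc}}^2/N$ at the expense of others; your inequality does. What the eigenvalue route offers is a one-line equality case. In your argument, equality for a single $i$ only forces column $i$ of $\mathbf{A}$ to be orthogonal to the other columns, and you correctly require it for all $i$ at once to conclude $\mathbf{M}=N\mathbf{I}$. One correction to your aside: the L\"{o}wner-order reading is not valid. With $\mathrm{tr}(\mathbf{M})=N^2$, any $\mathbf{M}\neq N\mathbf{I}$ has an eigenvalue above $N$ and one below $N$, so $\mathbf{M}^{-1}-\mathbf{I}/N$ is indefinite. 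Hence the Hadamard covariance $\sigma_{\text{uc}}^2\mathbf{I}/N$ does not lie below every other design's covariance in that order. Only the coordinatewise and trace readings hold.
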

\begin{proof}
Each row of $\mathbf{A}$ has $\pm1$ entries, so $\mathrm{tr}(\mathbf{A}^\intercal\mathbf{A})=N^2$. Let $\lambda_1, \dots,\lambda_N$ be the eigenvalues of $\mathbf{A}^\intercal\mathbf{A}$. 
Since $\mathbf{A}^\intercal\mathbf{A}$ is positive definite for any invertible $\mathbf{A}$, all $\lambda_i>0$.
By the Cauchy-Schwarz inequality, $\mathrm{tr}((\mathbf{A}^\intercal\mathbf{A})^{-1})=\sum_i(1/\lambda_i) \geq N^2/\sum_i\lambda_i=1$, with equality iff all $\lambda_i=N$, i.e., $\mathbf{A}^\intercal\mathbf{A} = N\mathbf{I}$, which is the property of a Hadamard matrix~\cite{khashaba2019low,roh2023context,harwit2012hadamard,hotelling1944some}.
\end{proof}

Fig.~\ref{fig:hadamard_snr} compares the readout noise between conventional element-wise and Hadamard-encoded verification.
The $\pm1$ constraint is not artificial---it matches the physical BL driving constraint exactly in the existing macro (Section~\ref{sec:HD-PV-in-RRAM}).
Under Hadamard encoding, the column is measured as
\begin{equation}
    \hat{\mathbf{y}} = \mathbf{H}\mathbf{x} + \mathbf{n}, \quad \mathbf{H}^\intercal \mathbf{H} = N\mathbf{I}.
\label{eq:hadamard_read}
\end{equation}
Inverse Hadamard decoding yields
\begin{equation}
    \hat{\mathbf{x}}_{\text{H}}
    = \frac{1}{N}\mathbf{H}^\intercal \hat{\mathbf{y}}
    = \mathbf{x} + \frac{1}{N}\mathbf{H}^\intercal \mathbf{n},
\end{equation}
with decoded noise variance $\sigma_\text{uc}^2/N$---the optimal bound from Proposition~\ref{thm:Hadamard}.

{\bf Common-mode cancellation.}
In addition to uncorrelated noise reduction, Hadamard decoding cancels common-mode disturbances.
Since $\mu_\text{cm}$ is constant across all $N$ measurements, the $j$-th decoded cell receives $(1/N)\sum_i H_{ji}\mu_\text{cm} = (\mu_\text{cm}/N)\sum_i H_{ji}$.

For $j=1$ (all +1 row), $\sum_i H_{ji} = N$, so the first cell sees the full offset $\mu_\text{cm}$. For $j\geq 2$, the balanced rows satisfy $\sum_i H_{ji} = 0$, so:
\begin{equation}
    (1/N) \mathbf{H}^\intercal (\mu_\text{cm} \cdot \mathbf{1})
    = \mu_\text{cm} \cdot \mathbf{e_1},
\label{eq:CM_cancel}
\end{equation}
where $\mathbf{e_1}$ is the first standard basis vector, and the remaining $N-1$ cells are completely free of common-mode noise.
Hadamard encoding thus provides a dual benefit:
(i) $1/N$ variance reduction for uncorrelated noise across all $N$ cells (Proposition~\ref{thm:Hadamard}), and (ii) complete cancellation of common-mode disturbances for $N-1$ out of $N$ cells.
Conventional multi-read averaging~\cite{joshi2020accurate,wan2022compute,shim2020two} reduces uncorrelated noise (at $M\times$ the read cost) but passes $\mu_\text{cm}$ to every cell unchanged, since repeated reads of the same column share the same TIA/ADC and therefore the same $\mu_\text{cm}$ realization.

\ignore{
Under Hadamard reads, inverse decoding yields the $j$-th cell estimate:
\begin{equation}
\hat{x_j} = w_j + \frac{1}{N}\sum_i H_{ji}n_{\text{uc},i} + \frac{1}{N}\sum_i H_{ji} \mu_{\text{cm}}.
\label{eq:5}
\end{equation}
For the uncorrelated component, $\mathrm{Var}[(1/N) \sum_i H_{ji}n_{\text{uc},i}] = \sigma_{\text{uc}}^2/N$---a $1/N$ reduction for all $N$ cells.
For the common-mode component, note that $\sum_i H_{ji} = N$ if $j = 1$ (first row is all +1) and $\sum_i H_{ji} = 0$ for $j \geq 2$ (balanced rows are orthogonal to $\mathbf{1}$). Therefore:
\begin{equation}
    (1/N) \mathbf{H}^\intercal (\mu_\text{cm} \cdot \mathbf{1})
    = \mu_\text{cm} \cdot \mathbf{e_1},
\label{eq:CM_cancel}
\end{equation}
where $\mathbf{e_1}$ is the first standard basis vector, so the common-mode offset affects only the first decoded cell;
the remaining $N-1$ cells are completely free of it.
Hadamard encoding thus provides a dual benefit:
$1/N$ variance reduction for uncorrelated noise across all cells, and complete cancellation of common-mode disturbances for $N-1$ out of $N$ cells.
Conventional multi-read averaging reduces uncorrelated noise (at $M\times$ the read cost) but passes the common-mode offset to every cell unchanged.
}


\ignore{
\begin{equation}
    \mathrm{Var}\!\left[\frac{1}{N}\mathbf{H}^\intercal \mathbf{n}\right]
    = \frac{\sigma_\text{uc}^2}{N}.
\label{eq:SNR_improvement}
\end{equation}

In addition to uncorrelated noise reduction, Hadamard decoding provides a natural defense against common-mode disturbances.
Suppose a constant offset $\mu$ is added to every measurement (e.g., systematic TIA offset or \fixme{IR drop-induced column-wise offset}), so the noise vector becomes $\mathbf{n}+\mu\cdot\mathbf{1}$. 
Because the first row of a Hadamard matrix is all +1 and all subsequent rows are orthogonal to 1, after inverse decoding:
\begin{equation}
    (1/N) \mathbf{H}^\intercal (\mu\cdot \mathbf{1})
    = \mu\cdot \mathbf{e_1},
\label{eq:CM_cancel}
\end{equation}
where $\mathbf{e_1}$ is the first standard basis vector.
Therefore, the common-mode offset $\mu$ affects only the first decoded cell estimate; the remaining $N-1$ cells are completely free of common-mode contamination.
This property is unique to Hadamard encoding---conventional one-hot reads pass common-mode noise to every cell, and multi-read averaging does not cancel it at all.
}
The important fairness point is that both conventional and Hadamard verification use $N$ read patterns---Hadamard encoding changes the read basis, not the column read count.
Hadamard verification helps most when
(i) uncorrelated noise is significant ($1/N$ variance reduction), (ii) common-mode disturbances are present (cancellation for $N-1$ cells), and (iii) the column length $N$ is large.
All three conditions are increasingly satisfied as technology scales and array sizes grow.

\section{HD-PV with Column-Wise Write-and-Verify}
This section introduces: 
1) a column-wise WV scheme for parallel cell updates, and 
2) the first application of Hadamard encoding within the WV loop, enabled by reusing existing BL/SL drivers and SAR ADCs with no additional analog overhead.
We summarize the notation and assumptions:
\begin{itemize}
    \item Target, programmed, and read conductance vectors: $\mathbf{w}^*$, $\mathbf{w}$, $\hat{\mathbf{w}}$ per \eqref{eq:mapping_error} and \eqref{eq:read_noise}.
    \item TIA common-mode: $V_{\mathrm{cm}} = \tfrac{1}{2}V_{\mathrm{DD}}$ (see Fig.~\ref{fig:CBA_structure}). SL is disconnected from ADC during SET/RESET, and reconnected during verification in WV or inference.
    \item Initial writes from HRS (zero weight) use coarse SET pulses (higher voltage); fine updates use $\approx$0.25~LSB/step SET/RESET pulses~\cite{yu2021rram, zhang2021efficient}.
\end{itemize}

\subsection{Column-Wise Write-and-Verify Flow}
\label{sec:column-wise wv}

\begin{figure}[t]
    \centering
    \includegraphics[width=1.0\columnwidth]{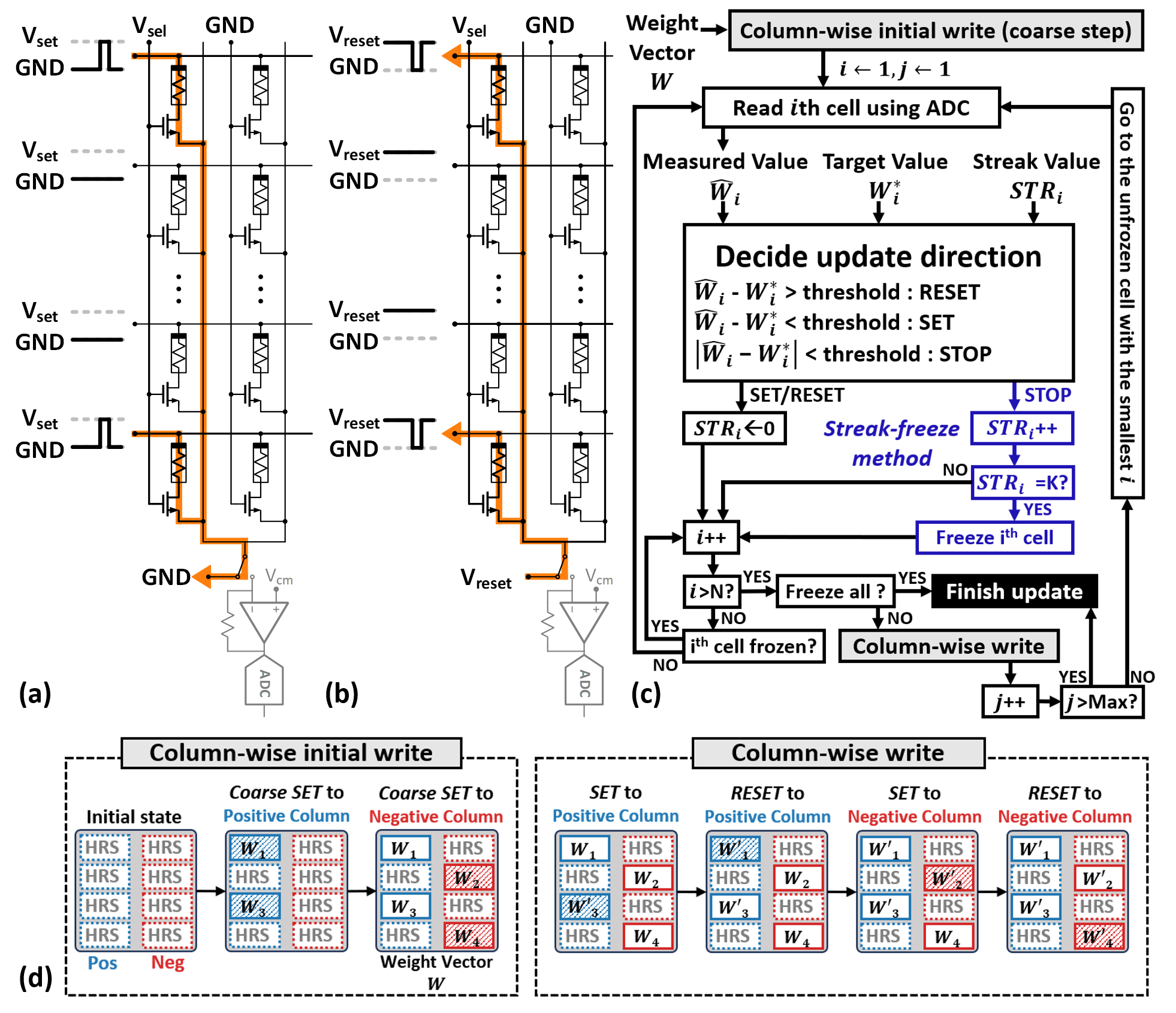}
    \vspace{-1em}
    \caption{Column-wise WV used by all methods: (a), (b) parallel SET/RESET operations, (c) conventional decision flow, and (d) signed mapping sequence of a weight vector. The hatched cells indicate the cells updated at each step.}
    \label{fig:col_wv}
    \vspace{-1em}
\end{figure}

We adopt a column-wise WV backend so that multiple cells are updated in parallel (see Fig.~\ref{fig:col_wv}).
Each verification sweep classifies every cell as SET, RESET, or STOP based on its deviation from the target using ADC (threshold: 0.5~LSB), i.e., RESET if \(w_i - w_i^* > 0.5\text{LSB}\), SET if \(w_i - w_i^* < -0.5\text{LSB}\), and STOP otherwise.
A streak counter freezes a cell only after $K$ consecutive within-threshold reads, preventing premature freezing from noisy observations.

Fig.~\ref{fig:col_wv}(a)-(b) shows the column-wise write operation. 
Once the required pulse counts are determined, SET/RESET updates are applied simultaneously to all target cells by selecting the corresponding WL with \(V_{\mathrm{sel}}\) and biasing the SL. 
During SET, BLs receive \(V_{\mathrm{set}}\) pulses while SL is grounded; during RESET, BLs receive GND pulses while SL is biased at \(V_{\mathrm{reset}}\). 
Unlike sequential cell-by-cell programming, a single write operation updates multiple cells concurrently, providing a speedup, and the latency of a column update is determined by the most demanding cell in that phase.
Fig.~\ref{fig:col_wv}(d) illustrates the signed mapping sequence, where each weight is stored by a positive/negative pair and one cell in the pair remains at HRS to encode zero.
The update consists of four phases: SET/RESET to the positive column and SET/RESET to the negative column. Frozen cells skip the update.

\subsection{Hadamard-Encoding in an RRAM CBA}
\label{sec:HD-PV-in-RRAM}

\begin{figure}[t]
    \centering
    \includegraphics[width=1.0\columnwidth]{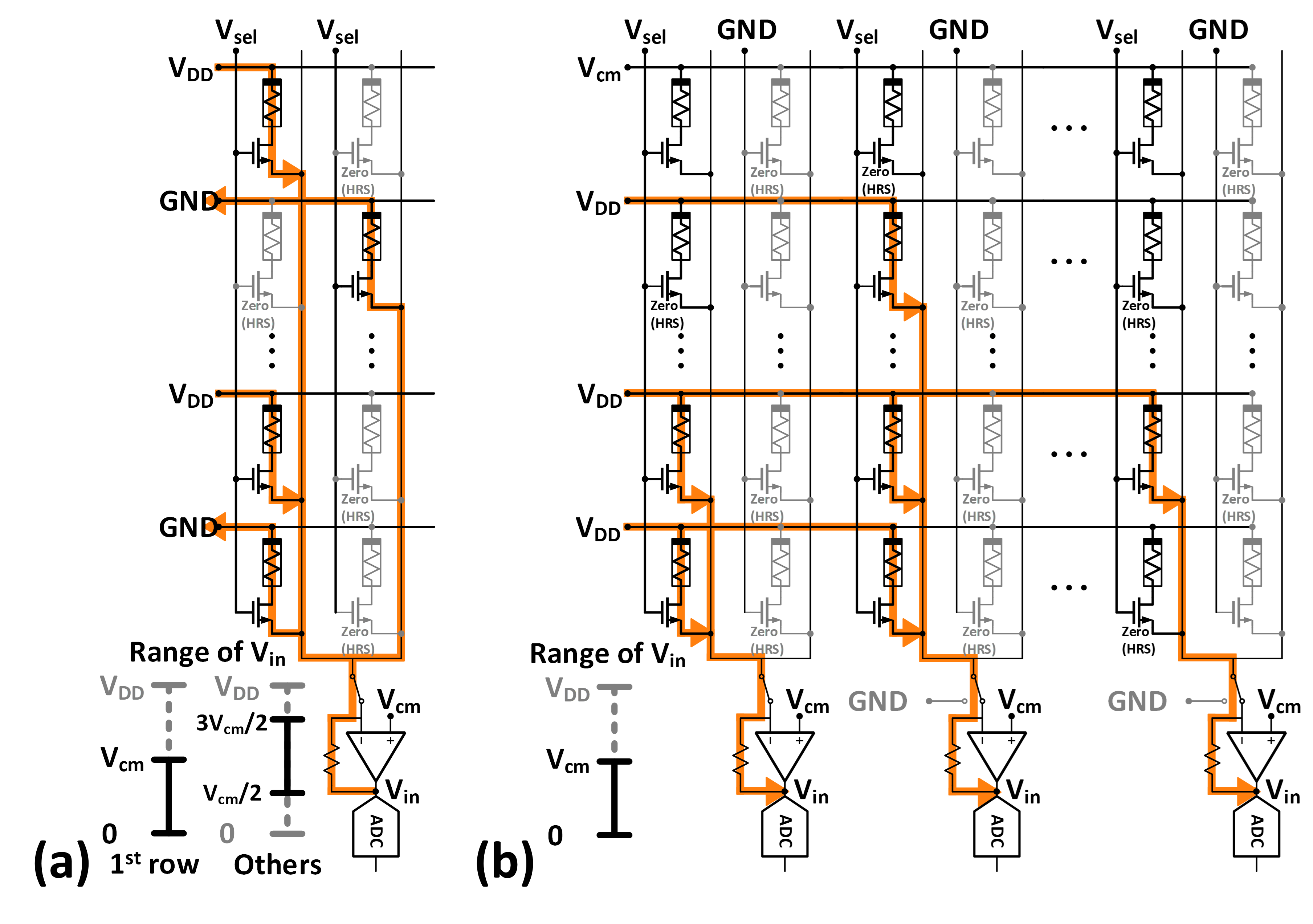}
    \vspace{-2em}
    \caption{CBA architecture supporting Hadamard-encoded verification: 
    (a) implementing $\{\pm1\}$ Hadamard entries via BL/SL driving during WV; 
    (b) VMM with bit-serial input and signed weights during inference.}
    \label{fig:hadamard_arch}
    \vspace{-1em}
\end{figure}

HD-PV replaces the conventional one-hot verify reads with Hadamard-encoded sensing using existing macro hardware.

\ignore{
To perform WV, a voltage pattern is applied to the bit-lines and the resulting signals are digitized by an ADC, whose output includes conversion noise (see Fig.~\ref{fig:hadamard_snr}(a).)
Conventional WV reads each cell individually, containing full analog noise at every measurement (see Fig.~\ref{fig:hadamard_snr}(b)).
Multi-read averaging~\cite{joshi2020accurate} improves SNR---averaging \(M\) repeated reads reduces noise power by \(1/M\) (\(\sigma_{\text{avg}}^2 = \sigma_{\text{single}}^2/M\))---but increases latency and energy by \(M\times\).

For the first time in WV, we apply Hadamard-encoded sensing such that the crossbar-array output under the $i$-th Hadamard pattern becomes \( \hat{y_i} = \mathbf{H}_i^\intercal \mathbf{w} + n_i \) ($i\in\{1,...,N\}$), where \(N\) denotes the number of cells in a column, and \( n_i \) are independent and identically distributed read-noise samples (see~\eqref{eq:hadamard_read}). 
Because noise terms are uncorrelated across the $N$ read operations, the inverse Hadamard transform averages them out during decoding and reduces noise variance to $\sigma^2/N$ as in \eqref{eq:SNR_improvement} (see Fig.~\ref{fig:hadamard_snr}(c)).
This yields a substantial SNR improvement without multiple reads, making Hadamard encoding far more efficient than multi-read averaging.
}

{\bf Bitline encoding.}
Hadamard-encoding is performed through controlled BL/SL voltages as shown in Fig.~\ref{fig:hadamard_arch}(a).
When the SL is held at \(V_{\mathrm{cm}} = \tfrac{1}{2}V_{\mathrm{DD}}\) by the TIA, BLs driven to \(V_{\mathrm{DD}}\) encode \(+1\), while BLs driven to GND encode \(-1\). 
Cells under a \(+1\) entry drive current from BL to SL, whereas cells under a \(-1\) entry drive current from SL to BL. 
For the first Hadamard row (all \(+1\)), the net current drives the TIA output below \(V_{\mathrm{cm}}\), so \(V_{\mathrm{in}}\) lies in \([0,\,V_{\mathrm{cm}}]\), which is the same range used for inference. 
For the remaining Hadamard rows (balanced \(+1/-1\)), the net current can be either positive or negative, placing \(V_{\mathrm{in}}\) within \([\tfrac{1}{2}V_{\mathrm{cm}},\,\tfrac{3}{2}V_{\mathrm{cm}}]\). 
Since signed weights are stored using adjacent column pairs for positive and negative components, with one cell in each pair kept in HRS to represent zero,
simultaneous selection and readout of both columns can be performed, as shown in Fig.~\ref{fig:hadamard_arch}(a). 
During bit-serial inference, input '1' drives \(V_{\mathrm{DD}}\) to the BL and input '0' drives \(V_{\mathrm{cm}}\) (see Fig.~\ref{fig:hadamard_arch}(b)).

\begin{figure}[t]
    \centering
    \includegraphics[width=0.95\columnwidth]{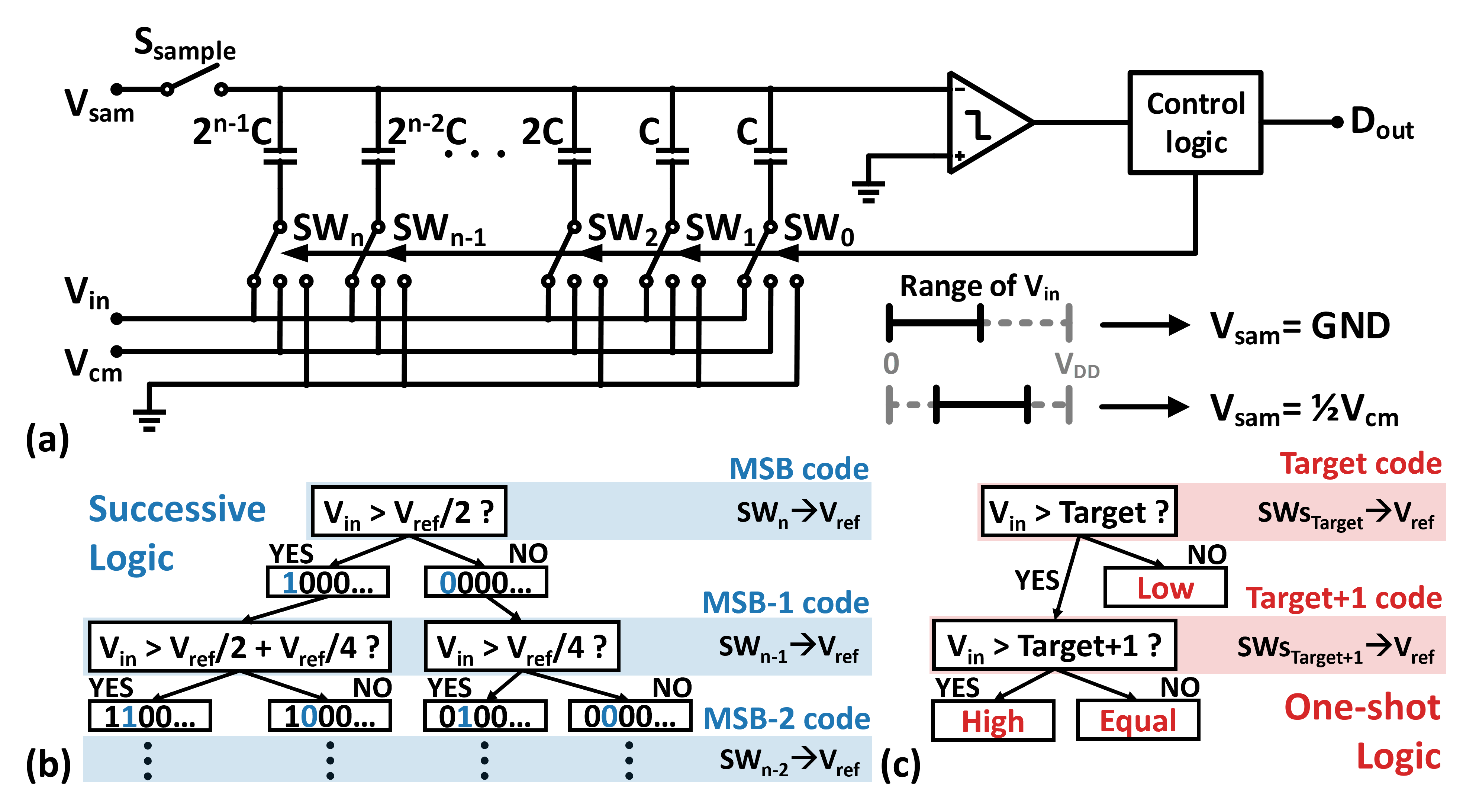}
    \vspace{-1em}
    \caption{
    Operation of an \(n\)-bit SAR ADC: 
    (b) full conversion using \textit{SAR logic} (requiring $n$ sequential comparisons), and 
    (c) lightweight \textit{compare logic} performing a simple magnitude check against the target (requiring one or two comparisons).
    }
    \label{fig:multi_read}
\end{figure}

\begin{figure}[t]
    \centering
    \includegraphics[width=1.0\columnwidth]{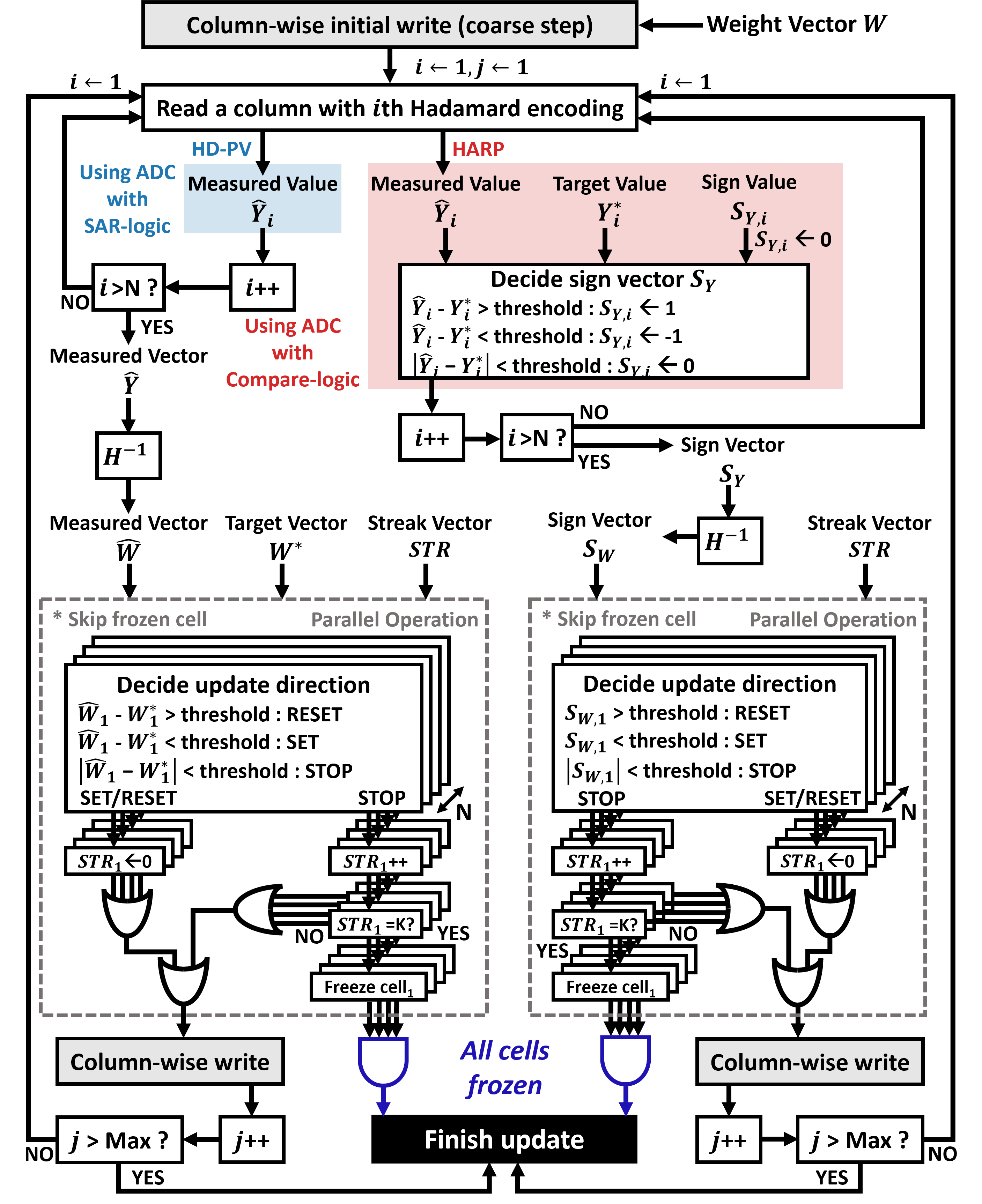}
    \vspace{-1em}
    \caption{HD-PV and HARP algorithms. HD-PV reconstructs full-valued cell estimates, whereas HARP decodes ternary Hadamard-domain comparison outcomes before generating parallel cell updates.}
    \label{fig:hd_pv}
    \vspace{-1em}
\end{figure}

{\bf ADC sampling reference.}
The same SAR ADC can accommodate both the first-row and balanced-row ranges by switching the sampling reference as shown in Fig.~\ref{fig:multi_read}(a).
Setting \(V_{\mathrm{sam}}\) = GND allows the ADC to naturally cover the first-row range (\([0,\,V_{\mathrm{cm}}]\)). 
Setting \(V_{\mathrm{sam}}\) = \(\tfrac{1}{2}V_{\mathrm{cm}}\) enables adjustment of the input range around $V_\mathrm{cm}$ (\([\tfrac{1}{2}V_{\mathrm{cm}},\,\tfrac{3}{2}V_{\mathrm{cm}}]\)) for the balanced rows. 
By simply switching \(V_{\mathrm{sam}}\) between these two reference levels, the existing SAR ADC can accommodate the full dynamic range of Hadamard-encoded readout without increasing resolution or modifying the analog front-end, enabling efficient Hadamard-based verification on the same hardware used for inference.

{\bf Digital decoding.}
Inverse Hadamard decoding is implemented digitally. 
Since bit-sliced inference already includes shift-and-adder logic to combine partial sums, the periphery naturally provides parallel adders that can accumulate the Hadamard measurements and reconstruct all $N$ decoded cell estimates.
Measurements from successive Hadamard patterns are streamed into these adders, enabling all $N$ inverse Hadamard outputs to be reconstructed in parallel.
The latency and energy overhead of this digital step is quantified in Section~\ref{sec:simulation_results}.

\subsection{HD-PV Algorithm}
\label{sec:HD-PV}


The overall HD-PV flow is illustrated in Fig.~\ref{fig:hd_pv}. 
A column is read through $N$ Hadamard patterns to obtain \(\hat{\mathbf{y}}\), which is then decoded via inverse Hadamard transform using digital adders. The decoded estimate \(\hat{\mathbf{w}}\) is compared with \(\mathbf{w}^*\), and SET/RESET/STOP decisions are generated in parallel for all cells. Column-wise writing is then applied as in Fig.~\ref{fig:col_wv}. Each cell is frozen once its streak counter reaches $K$, and HD-PV terminates when all cells are frozen or the maximum iteration count is reached.
Compared to conventional one-hot (single-cell) verify reads, HD-PV improves the quality of each verification sweep without increasing the number of column reads per sweep.


\section{HARP for Reduced WV Energy}
\subsection{ADC Costs in HD-PV}

HD-PV improves verify reliability, but it still performs a full-resolution SAR conversion for every Hadamard measurement. 
This is expensive: an \(n\)-bit SAR ADC executes an \(n\)-step binary search (see Fig.~\ref{fig:multi_read}) and 
incurs \(n\)-cycle latency and substantial switching energy in both the capacitor array and SAR logic. 
In practical ACiM macros, where WV must be repeated many times across many columns, this per-read cost dominates latency and energy.

The key insight behind HARP is that WV is fundamentally a classification problem, not an estimation problem.
The verifier needs to assign each cell to one of three categories \{SET, RESET, STOP\}---it does not need the full digital code.
If the Hadamard-domain measurement can be compared directly with the Hadamard-domain target, the verifier can make this classification with far less ADC work.

\subsection{HARP with One-Shot Target Comparison}

Fig.~\ref{fig:multi_read}(c) illustrates a one-shot target-comparison mode that can be incorporated into standard SAR ADCs. 
In a standard SAR conversion, the ADC samples $V_\mathrm{in}-V_\mathrm{sam}$ onto a binary-weighted capacitor array and resolves the output bit-by-bit.
In the HARP mode, the ADC sets all capacitor-array switches to the target code in a single step and performs one comparison of the input voltage \(V_{\mathrm{in}}\) against the target level. 
If necessary, the ADC performs one more comparison against the adjacent code (target+1), yielding a ternary outcome: \emph{Low}, \emph{High}, or \emph{Equal}. 
This reduces each Hadamard measurement to one or two comparisons instead of $n$ sequential ones. 
This one-shot comparison is compatible with one-hot verification but not with multi-read averaging or HD-PV, which rely on high-resolution conversion results. 


For the $i$-th Hadamard row $\mathbf{H}_i^\intercal$, the noisy measurement is 
\begin{equation}
\hat{{y}}_i = \mathbf{H}_i^\intercal \mathbf{w} + {n}_i, \qquad {n}_i \!\sim\! \mathcal{N}(\mathbf{0}, \sigma^2),
\label{eq:harp_read}
\end{equation}
where $\mathbf{w}$ is the current weight vector and $n_i$ represents analog read noise. 
Instead of digitizing $\hat{y}_i$ with an $n$-step SAR sequence, HARP compares it directly against the ideal Hadamard-domain target $y_i^* = \mathbf{H}_i^\intercal \mathbf{w}^*$ and generates a ternary sign value:
\begin{equation}
s_{y,i} = 
\begin{cases}
\!\operatorname{sign}(\hat{y}_i - y_i^*), & |\hat{y}_i - y_i^*| > 0.5\mathrm{LSB},\\
0, & \text{otherwise.}
\end{cases}
\label{eq:harp_decision}
\end{equation}
The resulting Hadamard-domain sign vector $\mathbf{s_y} = [s_{y,1}, s_{y,2}, \ldots, s_{y,N}]^\intercal$ 
is then decoded back to the cell domain via inverse Hadamard transform:
\begin{equation}
\mathbf{s_w} 
= \mathbf{H}^{-1}\mathbf{s_y}
= \tfrac{1}{N}\mathbf{H}^\intercal\mathbf{s_y},
\label{eq:harp_inverse}
\end{equation}
where each element $s_{w,i}$ serves as an aggregated update indicator for the $i$-th cell. 
A cell-domain threshold $\tau_w$ determines the update direction:
\begin{equation}
D_i = 
\begin{cases}
\text{RESET}, & s_{w,i} > \tau_w, \\
\text{SET},   & s_{w,i} < -\tau_w, \\
\text{STOP},  & |s_{w,i}| \le \tau_w.
\end{cases}
\label{eq:harp_decision_sw}
\end{equation}
These updates are executed in parallel using the column-wise write procedure (Fig.~\ref{fig:col_wv}), and the same streak-based termination criterion from Section~\ref{sec:column-wise wv} is applied to ensure robust convergence.

Compared with HD-PV, HARP offers two main advantages.
First, replacing full SAR conversions with one-shot comparisons greatly reduces ADC energy, which dominates the WV cost.
Second, applying the inverse Hadamard transform to ternary sign data rather than multi-bit ADC outputs reduces digital switching activity and power. 
Overall, HARP trades full-resolution estimation for Hadamard-domain decision aggregation, delivering fast and more energy-efficient WV convergence.

\begin{table}[t]
\centering
\caption{Device and circuit parameters used in evaluation.}
\label{tab:sim_param}
\renewcommand{\arraystretch}{1.15}
\setlength{\tabcolsep}{3pt}
\small
\resizebox{\columnwidth}{!}{%
\begin{tabular}{lll}
\toprule[1.5pt]
\multicolumn{3}{c}{\textbf{Device Parameters}} \\
\midrule
RRAM conductance & \multicolumn{2}{c}{0–13~\(\mu\)S} \\
SET/RESET pulse & 2~V / 100~ns & 1~step~per~pulse (50 iterations total) \\
Coarse SET pulse & 4~V / 100~ns & 5~steps~per~pulse (10 iterations total)\\
\midrule[1.5pt]
\multicolumn{3}{c}{\textbf{Circuit Parameters}} \\
\midrule
Supply voltage & \multicolumn{2}{c}{0.9~V} \\
Read pulse width & \multicolumn{2}{c}{32~ns} \\
TIA + ADC latency & \multicolumn{2}{c}{45–50~ns (SAR logic) / 30~ns (compare logic)} \\
TIA + ADC energy & \multicolumn{2}{c}{1.44–2.7~pJ + 1.8–32~pJ} \\
Inverse Hadamard adder latency & \multicolumn{2}{c}{5~ns} \\
Inverse Hadamard adder energy & \multicolumn{2}{c}{0.8–1.0~pJ (HD-PV) / 0.2~pJ (HARP)} \\
\bottomrule[1.5pt]
\end{tabular}%
}
\vspace{-1em}
\end{table}

\section{Evaluation}
\label{sec:simulation_results}

{\bf Setup and noise calibration.}
Table~\ref{tab:sim_param} summarizes the device and circuit parameters.
RRAM conductance characteristics and programming behavior are modeled using NeuroSim~\cite{chen2018neurosim}, while the noise parameters are calibrated against measured RRAM-CIM macros~\cite{correll20258,wu2023bulk,spetalnick20232}.
Programming and readout conditions follow prior measurements~\cite{correll20258,yu2021rram}.
Pulse amplitudes and widths use typical in-memory operating ranges, and inverse Hadamard decoding latency and energy are obtained from 28-nm synthesized circuits.

The fabricated 0--13~$\mu$S, 3-bit MLC RRAM macro in~\cite{correll20258} reports a 0.73~LSB read-current error including SAR-ADC noise, motivating our 0--0.8~LSB sweep and the use of 0.7~LSB as the severe-noise case.
Other measured macros report comparable programming variation, supporting the $\sigma_{\mathrm{map}}/G_{\max}=0.05$--0.10 setting~\cite{wu2023bulk,spetalnick20232}.

{\bf Post-layout validation.}
To validate feasibility under realistic parasitics, we performed post-layout simulations of a 32$\times$32 1T1R array implemented using the NCSU 45-nm FreePDK RRAM add-on~\cite{giacomin2018resistive} and Stanford RRAM model.
The $+1/-1$ read-current difference remained below 0.01~LSB across all 3-bit MLC levels and below 0.05~LSB for the highest-current ResNet-20 column.
At the average column current, Hadamard-pattern-dependent IR drop reduced the target current by only 0--0.08~LSB, with marginal impact on the RMS error and iteration count of HD-PV and HARP.

{\bf Baselines and comparison methodology.}
Unless noted, all methods use the same array size, cell precision, and column-wise write backend.
The primary baseline, \emph{column-wise single-cell (CW-SC) WV}, uses one-hot reads and the same compare-only ADC mode as HARP, isolating the read basis effect.
Multi-read averaging uses full SAR conversions because averaging requires full-resolution values.
For each $N$-cell column sweep, CW-SC, HD-PV, and HARP use $N$ read patterns, whereas $M$-read averaging requires $M \times N$ ADC conversions---a fundamentally higher cost.




\ignore{
Based on the scaling trends of SAR ADCs reported in prior surveys, the energy and area overhead increase rapidly with resolution due to the growth of the capacitor DAC and noise constraints~\cite{tang2022low}.
Consistent with these observations, recent RRAM-based ACiM prototype chips typically employ column ADCs with resolutions around 8–10 bits due to the stringent column pitch and energy constraints~\cite{wan2022compute,correll20258}.
Accordingly, the multilevel storage per cell and the crossbar-array (CBA) dimensions must be carefully selected to match the achievable sensing precision and maintain overall system efficiency. In the following experiments, the sensing decision is assumed to be quantized with a resolution of 1~LSB determined by the ADC configuration. 
Accordingly, all update comparisons use a threshold of 0.5~LSB.
}

\begin{figure}[t]
    \centering
    \includegraphics[width=1.0\columnwidth]{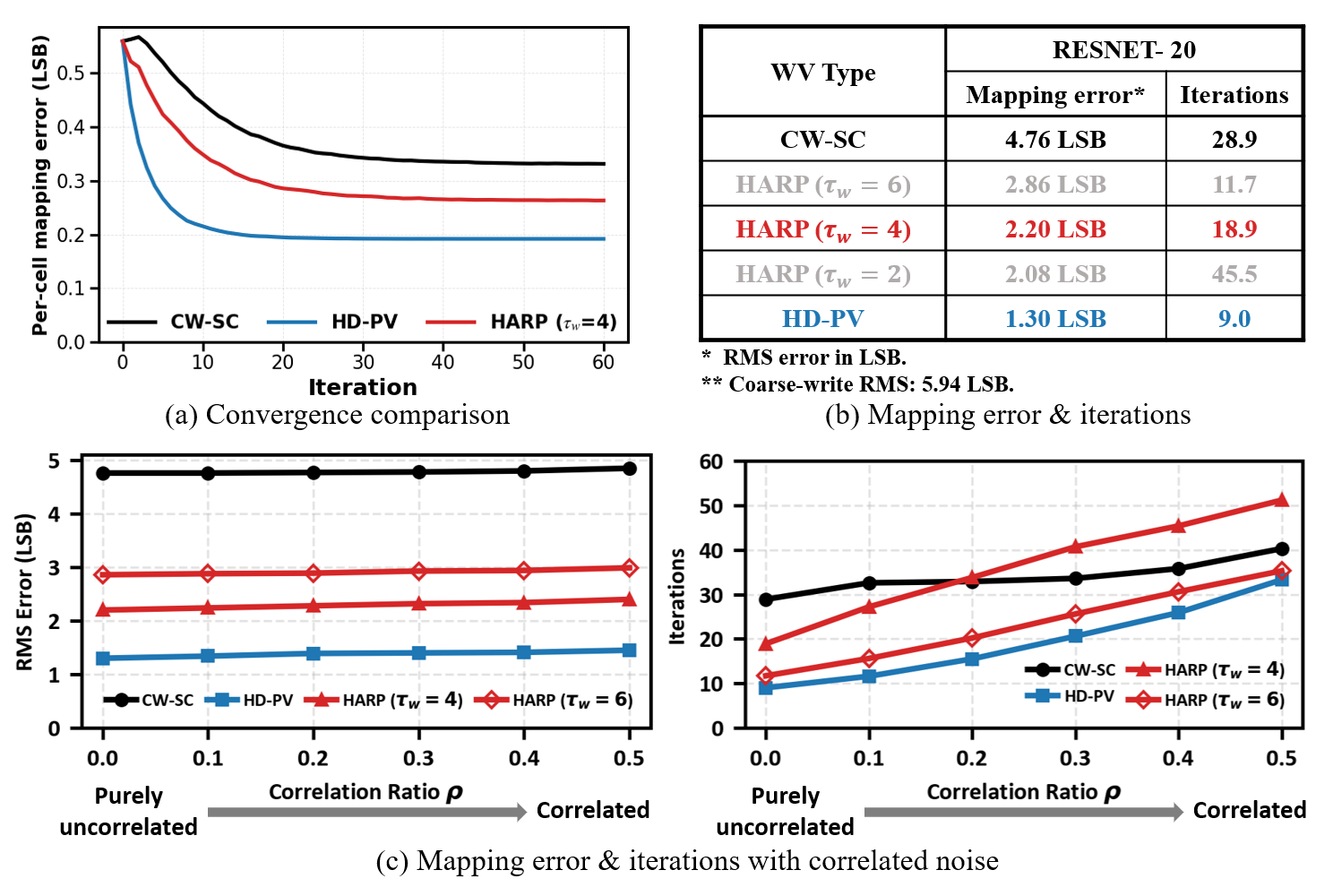}
    \vspace{-2em}
    \caption{Convergence comparison of CW-SC, HD-PV, and HARP under noisy verification. Hadamard-domain verification reduces both mapping error and iteration count.}
    \label{fig:graph_convergence}
    \vspace{-10pt}
\end{figure}

\subsection{WV Performance Comparison}

\begin{figure*}[t]
    \centering
    \includegraphics[width=0.95\textwidth]{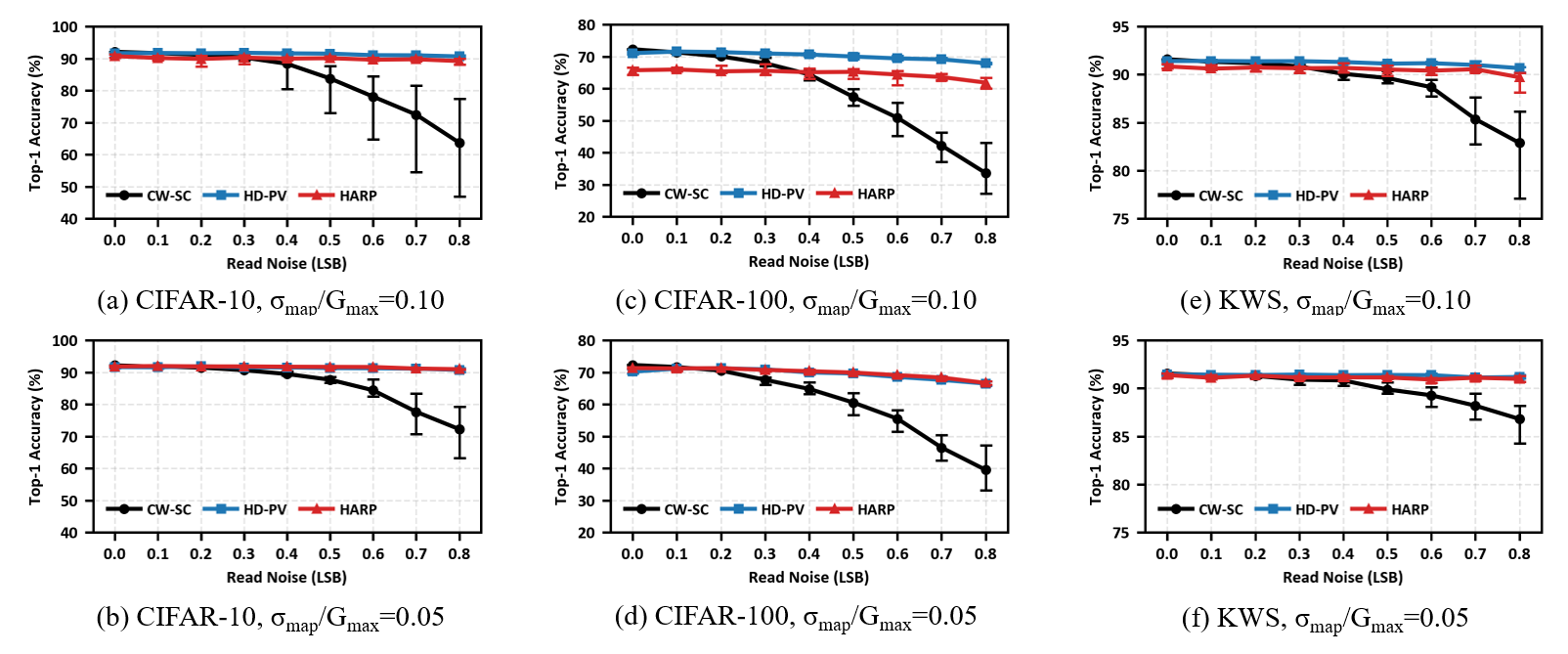}
    \vspace{-1em}
    \caption{Inference accuracy of CW-SC, HD-PV, and HARP on CIFAR-10, CIFAR-100, and KWS under two mapping noise levels ($\sigma_{\mathrm{map}}/G_{\max}=0.10$ and $0.05$). All results use the same configurations: \(B{=}6\), \(B_C{=}3\), \(N{=}32\), \(K{=}2\) and a 9-bit ADC.
}
    \label{fig:Accuracy}
    \vspace{-1em}
\end{figure*}

The default setting for the convergence study is \(B{=}6\), \(B_C{=}3\), \(N{=}32\), \(K{=}2\), mapping noise \(\sigma_{\mathrm{map}}/G_{\max}=0.10\), and read noise of 0.7\,LSB.
Update comparisons use a threshold of 0.5\,LSB.

Fig.~\ref{fig:graph_convergence}(a) compares the convergence behavior of CW-SC, HD-PV, and HARP. 
HD-PV achieves the steepest early error reduction because the inverse Hadamard step suppresses read noise before cell-domain decisions.
HARP converges more slowly than HD-PV because it operates on ternary comparison outcomes rather than full-valued readbacks, but it still clearly outperforms CW-SC.

The final mapping results in Fig.~\ref{fig:graph_convergence}(b) show the same trend quantitatively. 
HD-PV reduces the RMS mapping error from 4.76\,LSB to 1.30\,LSB, while cutting the iteration count from 28.9 to 9.0. 
HARP reduces the RMS error to 2.20\,LSB and the iteration count to 18.9.
The $\tau_w$ sensitivity study shows the expected accuracy--latency trade-off:
larger thresholds reduce WV iterations but may freeze cells earlier with larger mapping error, whereas smaller thresholds improves programming accuracy at the cost of more iterations.
We therefore use $\tau_w=4$ for the 32$\times$32 experiments and $\tau_w=6$ for the 64$\times$64 experiments as balanced operating points, while treating $\tau_w$ as a tunable knob rather than a fixed algorithmic constant.


Fig.~\ref{fig:graph_convergence}(c) shows the impact of common-mode noise. 
The total read noise power is fixed such that $\sqrt{\sigma_{\text{uc}}^2 + \sigma_{\text{cm}}^2} = 0.7$\,LSB, while $\rho = \sigma_{\text{cm}}^2 / (\sigma_{\text{uc}}^2 + \sigma_{\text{cm}}^2)$ is swept from $0$ (purely uncorrelated) to $0.5$ (equal correlated and uncorrelated power). 
Both HD-PV and HARP consistently achieve lower mapping error and fewer iterations than CW-SC across the entire range. 
Two mechanisms explain this: the uncorrelated component is reduced by $1/N$, while the common-mode component $\mu_\text{cm}$ is canceled for $N-1$ cells.
As $\rho$ increases, the second mechanism becomes dominant.
Multi-read averaging cannot cancel $\mu_\text{cm}$ because repeated reads share the same TIA/ADC.

\ignore{
To examine the sensitivity of HARP to the update threshold parameter $\tau_w$, Fig.~\ref{fig:graph_convergence}(b) also reports results for $\tau_w=6$ and $\tau_w=2$ (shown in gray). When $\tau_w$ is too large (e.g., $\tau_w=6$), updates terminate prematurely, leading to relatively larger mapping errors. Conversely, when $\tau_w$ is too small (e.g., $\tau_w=2$), the algorithm requires significantly more iterations before all cells satisfy the stopping condition, while the resulting RMS error improves only slightly compared to $\tau_w=4$. These results indicate that $\tau_w=4$ provides a practical balance between mapping accuracy and convergence speed, and therefore is used as the default configuration in the following experiments.
The larger RMS values in Fig.~\ref{fig:graph_convergence}(b), compared to the per-cell LSB errors in Fig.~\ref{fig:graph_convergence}(a), arise because each weight is represented by multiple binary-weighted cells, so MSB-cell errors dominate the aggregate deviation. Overall, Hadamard-based verification yields both faster convergence and more accurate programmed weights, with HD-PV providing the highest precision and HARP achieving strong accuracy with fewer iterations than CW-SC.
}

\subsection{Inference Accuracy under Iso-Memory-Footprint}

\begin{figure}[t]
    \centering
    \includegraphics[width=1.0\columnwidth]{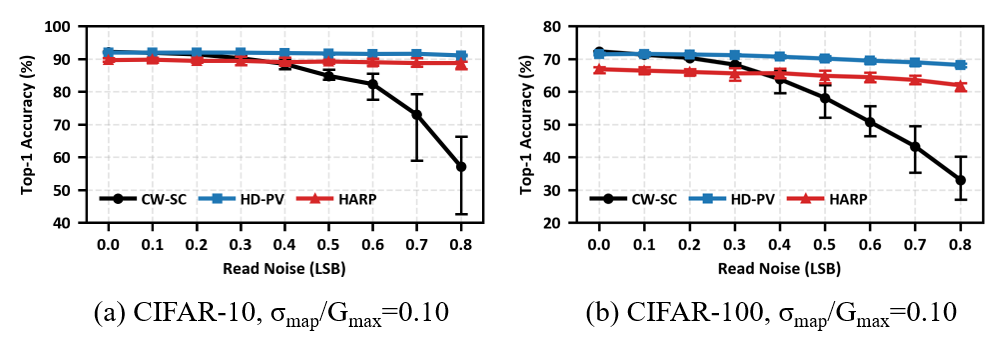}
    \vspace{-2em}
    \caption{CIFAR-10/100 accuracy for a 64$\times$64 array with \(B{=}6\), \(B_C{=}3\), \(N{=}64\), \(K{=}2\) and a 10-bit ADC. Hadamard-based verification remains robust even when the array and ADC resolution are scaled up.}
    \label{fig:graph_64x64}
    \vspace{-1em}
\end{figure}

We evaluate system-level inference on three edge-AI workloads: CIFAR-10 with ResNet-20, CIFAR-100 with ResNet-56, and KWS with a 12-class 10-layer CNN on the Google Speech Commands Dataset. All evaluations use identical bit precision, slice count, and array configuration, so comparisons are iso-memory-footprint:
any accuracy improvement comes from more reliable weight programming, not from additional memory.

Fig.~\ref{fig:Accuracy} shows the effect of read noise. 
CW-SC maintains accuracy only up to 0.2~LSB noise for CIFAR-10/100 and 0.4~LSB for KWS, then degrades rapidly, with over 20\,\% loss on CIFAR-10 at $\sigma \!\approx\! 0.8$~LSB and even larger loss on CIFAR-100.
HD-PV and HARP remain robust across the entire range, with less than 3\,\% degradation on all datasets.
Scalability to larger 64$\times$64 arrays (Fig.~\ref{fig:graph_64x64}) confirms the same trend: 
the denoising benefit of Hadamard verification grows with column length (both the $1/N$ variance reduction and $N-1$ common-mode-canceled cells scale favorably),
so HD-PV and HARP maintain near-constant accuracy where CW-SC becomes highly vulnerable.
Multi-read averaging~\cite{joshi2020accurate,shim2020two} can partially recover accuracy (Fig.~\ref{fig:graph_mult}), but at substantial ADC overhead analyzed in Section~\ref{sec:latency_energy}.

\begin{figure}[t]
    \centering
    \includegraphics[width=0.9\columnwidth]{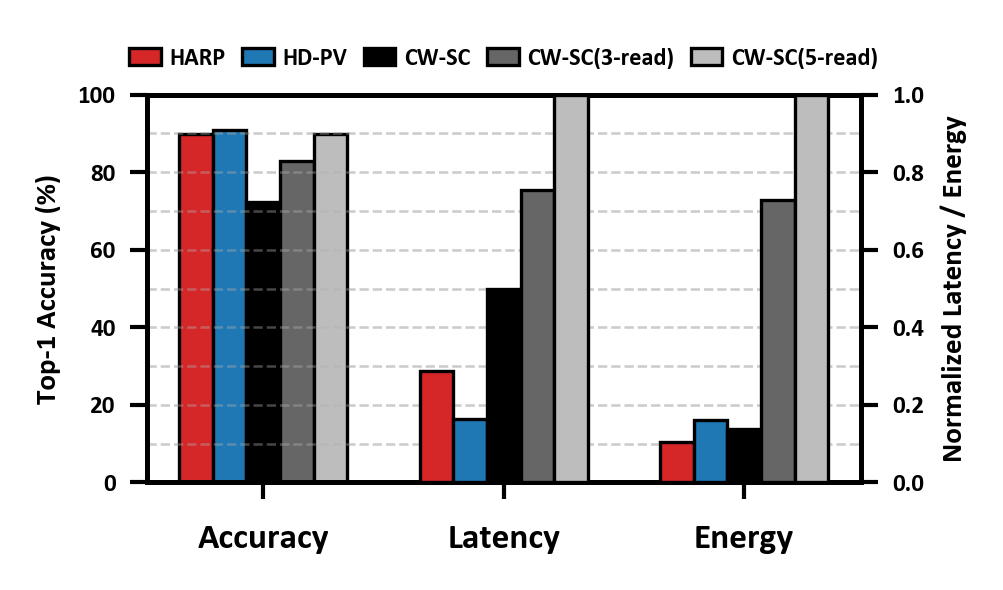}
    \vspace{-1em}
    \caption{Accuracy, normalized latency, and normalized energy on CIFAR-10 under identical memory footprint with $\sigma_{\mathrm{map}}/G_{\max}=0.10$ and read noise of $0.7$\,LSB with \(B{=}6\), \(B_C{=}3\), \(N{=}32\), \(K{=}2\) and a 9-bit ADC. HD-PV and HARP match the robustness of multi-read averaging with much lower ADC overhead.}
    \label{fig:graph_mult}
    \vspace{-1.5em}
\end{figure}


\subsection{Latency and Energy}
\label{sec:latency_energy}

\begin{figure}[t]
    \centering
    \includegraphics[width=0.95\columnwidth]{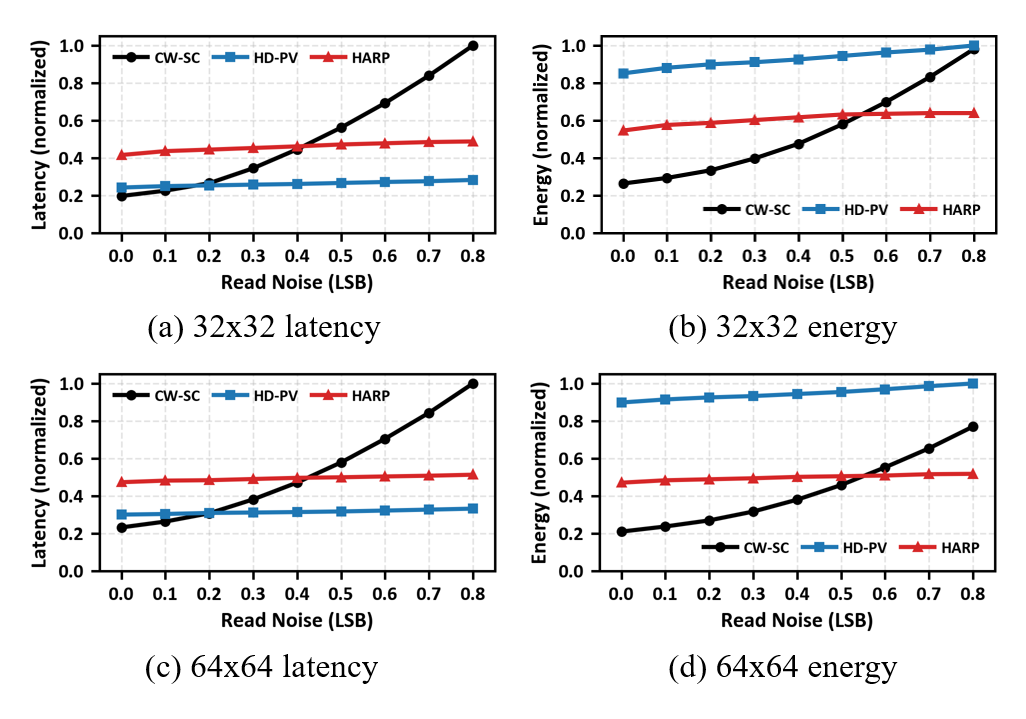}
    \vspace{-1em}
    \caption{WV latency and energy for CIFAR-10 (ResNet-20) versus read noise, with $\sigma_{\mathrm{map}}/G_{\max}=0.10$, $B{=}6$, $B_C{=}3$ and \(K{=}2\). (a)–(b) for a $32\times32$ array with a 9-bit ADC, and (c)–(d) for a $64\times64$ array with a 10-bit ADC. HARP provides the best latency-energy trade-off under noisy verification.}
    \label{fig:Latency_Energy}
    \vspace{-1.5em}
\end{figure}

Fig.~\ref{fig:graph_mult} highlights the efficiency of the proposed methods over multi-read averaging.
Although 5-read averaging recovers CW-SC accuracy, HD-PV and HARP are $6.1\times$ and $3.5\times$ faster and $6.2\times$ and $9.5\times$ more energy-efficient, respectively.
This improvement stems from the large ADC overhead incurred by multi-read averaging, which requires repeated full conversions for each read operation.

Fig.~\ref{fig:Latency_Energy} reports per-column WV latency and energy for CIFAR-10 (ResNet-20).
For the 32$\times$32 array, CW-SC is competitive at very low noise ($\leq$0.1~LSB) because it avoids Hadamard decoding, but its latency grows rapidly once noisy verify reads trigger incorrect update decisions and extra iterations. 
Above 0.4~LSB, CW-SC becomes the slowest method. 
HD-PV and HARP exhibit only modest latency growth (16\,\% and 17\,\% for 32$\times$32, and 9.7\,\% and 8.9\,\% for 64$\times$64) because they maintain stable decisions as noise increases.

Energy shows a different ordering. HD-PV has the highest per-read cost because every Hadamard measurement uses a full SAR conversion. 
HARP instead combines Hadamard denoising with compare-only ADC operation and consistently achieves the lowest energy in the high-noise regime. 
Under severe noise ($\geq$0.5~LSB) HARP uses $\approx$65\,\% of HD-PV’s energy for 32$\times$32.

Across all configurations, ADC activity accounts for more than 70\,\% of WV latency and more than 90\,\% of WV energy. 
This explains why reducing both misdirected iterations (via Hadamard denoising) and per-read ADC work (via compare-only mode) is crucial.

\subsection{Comparison with Prior Work}

Table~\ref{tab:comparison} compares the proposed schemes against representative WV studies.
The comparison highlights three levels of differentiation.

{\bf Optimization target.}
Prior works~\cite{gonugondla2020swipe,meng2022write,huang2024rwric, yan2022swim, zhang2021efficient} focus on the write policy, optimizing decisions under a fixed observation quality.
HD-PV and HARP instead optimize the observation itself---the verify read---by changing the measurement basis to the Hadamard basis.
The two approaches are complementary: better observations enable better decisions regardless of the pulse scheduling strategy.

{\bf Circuit-level cost modeling.}
Of the prior methods, only \cite{gonugondla2020swipe} specifies the ADC and evaluates at specific array dimensions.
References \cite{meng2022write,huang2024rwric, yan2022swim, zhang2021efficient} do not model ADC latency or energy, making it difficult to assess the true WV overhead.
HD-PV and HARP include explicit circuit-level modeling with practical ADC parameters.

{\bf ADC-aware verification.}
HARP is the only method that redesigns the ADC operating mode for WV.
By recognizing that WV requires classification rather than full digitization, HARP avoids most SAR work---a unique contribution not addressed by any prior method.
Unlike prior iteration-count estimates, our results use circuit-level simulations with practical parameters.

All improvements in Table~\ref{tab:comparison} are normalized against the CW-SC WV baseline---already superior to conventional cell-by-cell WV---indicating that the actual gains of HD-PV and HARP over conventional WV are even larger than the ratios reported.

\begin{table}[t]
\centering
\caption{Comparison with prior work.}
\label{tab:comparison}
\vspace{-1.5em}
\renewcommand{\arraystretch}{1.12}
\setlength{\tabcolsep}{2pt}
\scriptsize
\resizebox{\columnwidth}{!}{%
\begin{tabular}{lccc|cc}
\toprule
\textbf{Metric} & \textbf{ICCAD'20~\cite{gonugondla2020swipe}} & \textbf{DAC'22~\cite{meng2022write}} & \textbf{DAC'24~\cite{huang2024rwric}} & \textbf{HD-PV} & \textbf{HARP} \\
\midrule
Weight Precision         & 2--9 bit & 8 bit & 8 bit & \multicolumn{2}{c}{4--6 bit} \\
MLC                      & 1--3 bit & 2 bit & 2--4 bit & \multicolumn{2}{c}{2--3 bit} \\
ADC Specification        & Yes (7--10 bit) & No & No & \multicolumn{2}{c}{Yes (8--10 bit)} \\
DNN Dataset              & \makecell[c]{CIFAR-10\\MNIST}
                         & \makecell[c]{CIFAR-10/100}
                         & \makecell[c]{CIFAR-10/100\\COCO}
                         & \multicolumn{2}{c}{\makecell[c]{CIFAR-10/100\\KWS}} \\
Network Type             & \makecell[c]{LeNet-300/100\\8-layer CNN}
                         & \makecell[c]{ResNet-18/34}
                         & \makecell[c]{ResNet-50\\YOLOv8x}
                         & \multicolumn{2}{c}{\makecell[c]{ResNet-20/56\\10-layer CNN}} \\
Accuracy Change          & $<$1\% drop & 0.23\% gain & 0.9\% drop & 1.08\% gain & 0.16\% drop \\
Energy Change            & 5--10$\times$ & 10.3$\times$ & -- & 6.2$\times$ & 9.5$\times$ \\
Latency Change           & -- & -- & -- & 6.1$\times$ & 3.5$\times$ \\
Array Dimension          & 32$\times$32 & -- & -- & \multicolumn{2}{c}{32$\times$32/64$\times$64} \\
\bottomrule
\end{tabular}
}
\normalsize
\vspace{-1em}
\end{table}

\section{Conclusion}

This work presented HD-PV and HARP, two circuit-aware WV schemes for the verify-read bottleneck in RRAM-based ACiM.
HD-PV replaces statistically suboptimal one-hot verification with Hadamard encoding, reducing uncorrelated-noise variance by $1/N$ and canceling common-mode disturbances for $N-1$ cells without additional analog hardware.
By treating WV as classification rather than estimation, HARP replaces full ADC conversions with compare-only decisions.
Under an iso-memory footprint, both methods recover accuracy lost to verify-read noise, while the improved programming margin can reduce bit slices or array size at fixed accuracy.
The framework is transparent to DNN models and training, complementary to pulse scheduling, and compatible with existing macros.

\vspace{-1em}

\begin{acks}
    This work was supported in part by K-CHIPS(2410018409, RS-2026-25524122, 26086-45FC) funded by MOTIE, by NRF (RS-2024-00402495), and by IITP(IITP-2026-RS-2023-00256081) grant funded by the Korea government(MSIT).
    The EDA tool was supported by the IC Design Education Center(IDEC), Korea.
    Woo-Seok Choi is the corresponding author.
\end{acks}

\bibliographystyle{ACM-Reference-Format}
\bibliography{refs}

@article{chen2018neurosim,
  title={NeuroSim: A circuit-level macro model for benchmarking neuro-inspired architectures in online learning},
  author={Chen, Pai-Yu and Peng, Xiaochen and Yu, Shimeng},
  journal={IEEE Transactions on Computer-Aided Design of Integrated Circuits and Systems},
  volume={37},
  number={12},
  pages={3067--3080},
  year={2018},
  publisher={IEEE}
}

@article{hung2020challenges,
  title={Challenges and trends indeveloping nonvolatile memory-enabled computing chips for intelligent edge devices},
  author={Hung, Je-Min and Li, Xueqing and Wu, Juejian and Chang, Meng-Fan},
  journal={IEEE Transactions on Electron Devices},
  volume={67},
  number={4},
  pages={1444--1453},
  year={2020},
  publisher={IEEE}
}

@article{yao2020fully,
  title={Fully hardware-implemented memristor convolutional neural network},
  author={Yao, Peng and Wu, Huaqiang and Gao, Bin and Tang, Jianshi and Zhang, Qingtian and Zhang, Wenqiang and Yang, J Joshua and Qian, He},
  journal={Nature},
  volume={577},
  number={7792},
  pages={641--646},
  year={2020},
  publisher={Nature Publishing Group UK London}
}

@article{wan2022compute,
  title={A compute-in-memory chip based on resistive random-access memory},
  author={Wan, Weier and Kubendran, Rajkumar and Schaefer, Clemens and Eryilmaz, Sukru Burc and Zhang, Wenqiang and Wu, Dabin and Deiss, Stephen and Raina, Priyanka and Qian, He and Gao, Bin and others},
  journal={Nature},
  volume={608},
  number={7923},
  pages={504--512},
  year={2022},
  publisher={Nature Publishing Group UK London}
}

@article{sze2017efficient,
  title={Efficient processing of deep neural networks: A tutorial and survey},
  author={Sze, Vivienne and Chen, Yu-Hsin and Yang, Tien-Ju and Emer, Joel S},
  journal={Proceedings of the IEEE},
  volume={105},
  number={12},
  pages={2295--2329},
  year={2017},
  publisher={Ieee}
}

@inproceedings{gonugondla2020swipe,
  title={Swipe: Enhancing robustness of reram crossbars for in-memory computing},
  author={Gonugondla, Sujan K and Patil, Ameya D and Shanbhag, Naresh R},
  booktitle={Proceedings of the 39th International Conference on Computer-Aided Design},
  pages={1--9},
  year={2020}
}

@inproceedings{meng2022write,
  title={Write or not: Programming scheme optimization for RRAM-based neuromorphic computing},
  author={Meng, Ziqi and Sun, Yanan and Qian, Weikang},
  booktitle={Proceedings of the 59th ACM/IEEE Design Automation Conference},
  pages={985--990},
  year={2022}
}

@inproceedings{he2023prive,
  title={Prive: efficient RRAM programming with chip verification for RRAM-based in-memory computing acceleration},
  author={He, Wangxin and Meng, Jian and Gonugondla, Sujan Kumar and Yu, Shimeng and Shanbhag, Naresh R and Seo, Jae-sun},
  booktitle={2023 Design, Automation \& Test in Europe Conference \& Exhibition (DATE)},
  pages={1--6},
  year={2023},
  organization={IEEE}
}

@article{yu2021rram,
  title={RRAM for compute-in-memory: From inference to training},
  author={Yu, Shimeng and Shim, Wonbo and Peng, Xiaochen and Luo, Yandong},
  journal={IEEE Transactions on Circuits and Systems I: Regular Papers},
  volume={68},
  number={7},
  pages={2753--2765},
  year={2021},
  publisher={IEEE}
}

@inproceedings{huang2024rwric,
  title={RWriC: A Dynamic Writing Scheme for Variation Compensation for RRAM-based In-Memory Computing},
  author={Huang, Yucong and He, Jingyu and Cheng, Tim Kwang-Ting and Tsui, Chi Ying and Ye, Terry Tao},
  booktitle={Proceedings of the 61st ACM/IEEE Design Automation Conference},
  pages={1--6},
  year={2024}
}

@inproceedings{zhang2021efficient,
  title={An efficient programming framework for memristor-based neuromorphic computing},
  author={Zhang, Grace Li and Li, Bing and Huang, Xing and Shen, Chen and Zhang, Shuhang and Burcea, Florin and Graeb, Helmut and Ho, Tsung-Yi and Li, Hai and Schlichtmann, Ulf},
  booktitle={2021 Design, Automation \& Test in Europe Conference \& Exhibition (DATE)},
  pages={1068--1073},
  year={2021},
  organization={IEEE}
}

@article{correll20258,
  title={An 8-bit 20.7 TOPS/W Multilevel Cell ReRAM Macro With ADC-Assisted Bit-Serial Processing},
  author={Correll, Justin M and Jie, Lu and Song, Seungheun and Lee, Seungjong and Zhu, Junkang and Tang, Wei and Wormald, Luke and Erhardt, Jack and Breil, Nicolas and Quon, Roger and others},
  journal={IEEE Journal of Solid-State Circuits},
  year={2025},
  publisher={IEEE}
}

@inproceedings{he2019noise,
  title={Noise injection adaption: End-to-end ReRAM crossbar non-ideal effect adaption for neural network mapping},
  author={He, Zhezhi and Lin, Jie and Ewetz, Rickard and Yuan, Jiann-Shiun and Fan, Deliang},
  booktitle={Proceedings of the 56th Annual Design Automation Conference 2019},
  pages={1--6},
  year={2019}
}

@article{khashaba2019low,
  title={A low-noise frequency synthesizer using multiphase generation and combining techniques},
  author={Khashaba, Amr and Elkholy, Ahmed and Megawer, Karim M and Ahmed, Mostafa Gamal and Hanumolu, Pavan Kumar},
  journal={IEEE Journal of Solid-State Circuits},
  volume={55},
  number={3},
  pages={592--601},
  year={2019},
  publisher={IEEE}
}

@article{joshi2020accurate,
  title={Accurate deep neural network inference using computational phase-change memory},
  author={Joshi, Vinay and Le Gallo, Manuel and Haefeli, Simon and Boybat, Irem and Nandakumar, Sasidharan Rajalekshmi and Piveteau, Christophe and Dazzi, Martino and Rajendran, Bipin and Sebastian, Abu and Eleftheriou, Evangelos},
  journal={Nature communications},
  volume={11},
  number={1},
  pages={2473},
  year={2020},
  publisher={Nature Publishing Group UK London}
}

@article{ielmini2025resistive,
  title={Resistive switching random-access memory (RRAM): Applications and requirements for memory and computing},
  author={Ielmini, Daniele and Pedretti, Giacomo},
  journal={Chemical Reviews},
  year={2025},
  publisher={ACS Publications}
}

@article{roh2023context,
  title={A context-aware readout system for sparse touch sensing array using ultra-low-power always-on event detection},
  author={Roh, Hyeri and Choi, Woo-Seok},
  journal={IEEE Transactions on Circuits and Systems II: Express Briefs},
  volume={70},
  number={9},
  pages={3719--3723},
  year={2023},
  publisher={IEEE}
}

@article{xie2023high,
  title={A high-parallelism RRAM-based compute-in-memory macro with intrinsic impedance boosting and in-ADC computing},
  author={Xie, Tian and Yu, Shimeng and Li, Shaolan},
  journal={IEEE Journal on Exploratory Solid-State Computational Devices and Circuits},
  volume={9},
  number={1},
  pages={38--46},
  year={2023},
  publisher={IEEE}
}

@article{li202240,
  title={A 40-nm MLC-RRAM compute-in-memory macro with sparsity control, on-chip write-verify, and temperature-independent ADC references},
  author={Li, Wantong and Sun, Xiaoyu and Huang, Shanshi and Jiang, Hongwu and Yu, Shimeng},
  journal={IEEE Journal of Solid-State Circuits},
  volume={57},
  number={9},
  pages={2868--2877},
  year={2022},
  publisher={IEEE}
}

@article{tang2022low,
  title={Low-power SAR ADC design: Overview and survey of state-of-the-art techniques},
  author={Tang, Xiyuan and Liu, Jiaxin and Shen, Yi and Li, Shaolan and Shen, Linxiao and Sanyal, Arindam and Ragab, Kareem and Sun, Nan},
  journal={IEEE Transactions on Circuits and Systems I: Regular Papers},
  volume={69},
  number={6},
  pages={2249--2262},
  year={2022},
  publisher={IEEE}
}

@article{peng2020dnn+,
  title={DNN+ NeuroSim V2. 0: An end-to-end benchmarking framework for compute-in-memory accelerators for on-chip training},
  author={Peng, Xiaochen and Huang, Shanshi and Jiang, Hongwu and Lu, Anni and Yu, Shimeng},
  journal={IEEE Transactions on Computer-Aided Design of Integrated Circuits and Systems},
  volume={40},
  number={11},
  pages={2306--2319},
  year={2020},
  publisher={IEEE}
}

@inproceedings{andrulis2023raella,
  title={Raella: Reforming the arithmetic for efficient, low-resolution, and low-loss analog pim: No retraining required!},
  author={Andrulis, Tanner and Emer, Joel S and Sze, Vivienne},
  booktitle={Proceedings of the 50th Annual International Symposium on Computer Architecture},
  pages={1--16},
  year={2023}
}

@inproceedings{yan2022swim,
  title={Swim: Selective write-verify for computing-in-memory neural accelerators},
  author={Yan, Zheyu and Hu, Xiaobo Sharon and Shi, Yiyu},
  booktitle={Proceedings of the 59th ACM/IEEE Design Automation Conference},
  pages={277--282},
  year={2022}
}

@INPROCEEDINGS{9870009,
  author={Song, Joonghyun and Shin, Jiwon and Kim, Hanseok and Choi, Woo-Seok},
  booktitle={2022 IEEE 4th International Conference on Artificial Intelligence Circuits and Systems (AICAS)}, 
  title={Energy-Efficient High-Accuracy Spiking Neural Network Inference Using Time-Domain Neurons}, 
  year={2022},
  volume={},
  number={},
  pages={5-8},
  doi={10.1109/AICAS54282.2022.9870009}}

@inproceedings{horowitz20141,
  title={1.1 computing's energy problem (and what we can do about it)},
  author={Horowitz, Mark},
  booktitle={2014 IEEE international solid-state circuits conference digest of technical papers (ISSCC)},
  pages={10--14},
  year={2014},
  organization={IEEE}
}

@article{shim2020two,
  title={Two-step write--verify scheme and impact of the read noise in multilevel RRAM-based inference engine},
  author={Shim, Wonbo and Seo, Jae-sun and Yu, Shimeng},
  journal={Semiconductor Science and Technology},
  volume={35},
  number={11},
  pages={115026},
  year={2020},
  publisher={IOP Publishing}
}

@book{harwit2012hadamard,
  title={Hadamard transform optics},
  author={Harwit, Martin},
  year={2012},
  publisher={Elsevier}
}

@article{hotelling1944some,
  title={Some improvements in weighing and other experimental techniques},
  author={Hotelling, Harold},
  journal={The Annals of Mathematical Statistics},
  volume={15},
  number={3},
  pages={297--306},
  year={1944},
  publisher={JSTOR}
}

@article{zhang2023edge,
  title={Edge learning using a fully integrated neuro-inspired memristor chip},
  author={Zhang, Wenbin and Yao, Peng and Gao, Bin and Liu, Qi and Wu, Dong and Zhang, Qingtian and Li, Yuankun and Qin, Qi and Li, Jiaming and Zhu, Zhenhua and others},
  journal={Science},
  volume={381},
  number={6663},
  pages={1205--1211},
  year={2023},
  publisher={American Association for the Advancement of Science}
}

@article{wu2023bulk,
  title={Bulk-switching memristor-based compute-in-memory module for deep neural network training},
  author={Wu, Yuting and Wang, Qiwen and Wang, Ziyu and Wang, Xinxin and Ayyagari, Buvna and Krishnan, Siddarth and Chudzik, Michael and Lu, Wei D},
  journal={Advanced Materials},
  volume={35},
  number={46},
  pages={2305465},
  year={2023},
  publisher={Wiley Online Library}
}

@inproceedings{spetalnick20232,
  title={A 2.38 mcells/mm 2 9.81-350 tops/W RRAM compute-in-memory macro in 40nm CMOS with hybrid offset/I OFF cancellation and I CELL R BLSL drop mitigation},
  author={Spetalnick, Samuel D and Chang, Muya and Konno, Shota and Crafton, Brian and Lele, Ashwin S and Khwa, Win-San and Chih, Yu-Der and Chang, Meng-Fan and Raychowdhury, Arijit},
  booktitle={2023 IEEE Symposium on VLSI Technology and Circuits (VLSI Technology and Circuits)},
  pages={1--2},
  year={2023},
  organization={IEEE}
}

@article{giacomin2018resistive,
  title={A resistive random access memory addon for the ncsu freepdk 45 nm},
  author={Giacomin, Edouard and Gaillardon, Pierre-Emmanuel},
  journal={IEEE Transactions on Nanotechnology},
  volume={18},
  pages={68--72},
  year={2018},
  publisher={IEEE}
}

\end{document}